\newtheorem{problem}[theorem]{Problem}
\title{Nonnegativity Problems for Matrix Semigroups} 
\titlerunning{Nonnegativity Problems for Matrix Semigroups} 
\author{Julian D'Costa}{Department of Computer Science, University of Oxford, UK}%
{julianrdcosta@gmail.com}{https://orcid.org/0000-0003-2610-5241}{emmy.network foundation under the aegis of the Fondation de Luxembourg.}
\author{Jo\"el Ouaknine}%
{Max Planck Institute for Software Systems, Saarland Informatics Campus, Germany}%
{joel@mpi-sws.org}{https://orcid.org/0000-0003-0031-9356}{%
DFG grant 389792660 as part of TRR 248 (see
\url{https://perspicuous-computing.science}). 
Jo\"el Ouaknine is also affiliated with Keble College, Oxford as \href{http://emmy.network/}{\texttt{emmy.network}} Fellow.}
\author{James Worrell}%
{Department of Computer Science, University of Oxford, UK}%
{jbw@cs.ox.ac.uk}{https://orcid.org/0000-0001-8151-2443}{}
\authorrunning{J. D'Costa, J. Ouaknine and J. Worrell} 
\keywords{Decidability, Linear Recurrence Sequences, Schanuel's Conjecture} 
\def\Z{\mathbb{Z}}
\def\R{\mathbb{R}}
\def\Q{\mathbb{Q}}
\def\C{\mathbb{C}}
\newcommand{\PTIME}{\textbf{PTIME}}
\newcommand{\abs}[1]{\left\lvert #1\right\rvert}
\def\set#1{{\{ #1 \}}}
\providecommand{\st}{}
\renewcommand{\st}{\mathrel{\mid}}
\renewcommand{\int}{\mathbb{Z}}
\DeclareMathOperator{\diag}{diag}
\DeclareMathOperator{\poly}{poly}
\DeclareMathOperator{\MPoly}{MPoly}
\begin{document}
	
	\maketitle
	
	\begin{abstract}
		The matrix semigroup membership problem asks, given square matrices $M,M_1,\ldots,M_k$
of the same dimension, whether $M$ lies in the semigroup generated by $M_1,\ldots,M_k$.  It is classical that this problem is undecidable in general but decidable in case $M_1,\ldots,M_k$ commute.  In this paper we consider the problem of whether, given $M_1,\ldots,M_k$, the semigroup
generated by $M_1,\ldots,M_k$ contains a non-negative matrix.  We show that in case $M_1,\ldots,M_k$ commute, this problem is decidable subject to Schanuel's Conjecture.  We show also that the problem is undecidable if the commutativity assumption is dropped. 
A key lemma in our decidability result is a procedure to determine, given a matrix $M$, whether the sequence of matrices $(M^n)_{n\geq 0}$ is ultimately nonnegative.  This answers a problem posed by S. Akshay~\cite{DBLP:conf/cade/AkshayCP22}.
The latter result is in stark contrast to the notorious fact that it is not known how to determine effectively whether for any specific matrix index $(i,j)$ the sequence $(M^n)_{i,j}$ is ultimately nonnegative (which is a formulation of the Ultimate Positivity Problem for linear recurrence sequences).
	\end{abstract}
\newpage

\section{Introduction}
The \emph{Membership Problem} for finitely generated matrix semigroups 
asks, given square matrices $M_,M_1,\ldots,M_k$ of the same dimension and with rational entries, whether $M$ lies in the semigroup generated by $M_1,\ldots,M_k$.  The problem was shown to be undecidable by Markov in the 1940s~\cite{Markov}, becoming thereby one of the first instances of a natural undecidable mathematical problem.  The problem however becomes decidable under the assumption that the matrices $M_1,\ldots,M_k$ commute~\cite{Babai}.

There are many variants of the Membership Problem.  In the \emph{Mortality Problem} one asks whether the zero matrix lies in a finitely generated matrix semigroup.  This problem is undecidable already for $3 \times 3$ matrices~\cite{Paterson}.  Meanwhile the \emph{Identity Problem} asks to determine whether the identity matrix lies in a given finitely generated matrix semigroup.  The latter problem is undecidable in general but decidable for certain nilpotent and low-order matrix groups~\cite{BHP17,Dong22,Dong23,KNP18}.

This paper is concerned with the 
\emph{Non-negative Membership Problem}, which asks to determine whether a given finitely generated matrix semigroup contains a non-negative matrix, i.e., a matrix all of whose entries are non-negative.  We show that this problem is undecidable in general but is decidable in the commutative case subject to Schanuel's Conjecture, a well-known unifying conjecture in transcendence theory.  Our reliance on Schanuel's Conjecture arises because we reduce the 
commutative case of the 
Non-negative Membership Problem to a decision problem in the first-order theory of the reals with exponential.  As shown by Macintyre and Wilkie~\cite{WM}, the latter theory is decidable subject to Schanuel's Conjecture.

A key lemma in our main decidability result involves determining for a given matrix $M$ whether for all but finitely many $n \in \mathbb{N}$ the matrix power $M^n$ is non-negative.  In such a case we say that $M$ is eventually non-negative.
We give an effective characterisation of eventually non-negative matrices, answering an open question of S.~Akshay~\cite{DBLP:conf/cade/AkshayCP22}.  The characterisation is relatively straightforward and relies on classical results about rational sequences over the semi-ring of non-negative rational numbers.
We note that the problem of determining 
whether for some fixed index $(i,j)$ 
the sequence of scalars
$(M^n)_{i,j}$ is ultimately non-negative is a formulation of the Ultimate Positivity Problem for linear recurrence sequences, decidability of which is a longstanding open problem~\cite{Ouaknine2014}.

Note that a finitely generated matrix semigroup contains a non-negative matrix if and only if it contains an eventually non-negative matrix.  Using a symbolic version of our criterion for determining whether a given matrix is eventually non-negative, we reduce the Non-negative Membership Problem to the 
decision problem for the theory of reals with exponential.

A simpler variant of our main result concerns the problem 
of deciding whether a finitely generated matrix semigroup contains a positive matrix.  We likewise show that this problem is decidable in the case of commuting matrices, subject to Schanuel's Conjecture.
Here we rely on a known characterisation of eventually positive matrices, due to~\cite{noutsos2006perron}.

As far as we are aware the Non-negative Membership Problem 
has not been directly addressed before.  We note however that 
decidability of the problem for sub-semigroups of 
the group $\mathbf{GL}(2,\mathbb{Z})$ of $2\times 2$ invertible integer matrices follows directly from~\cite[Theorem 13]{colcombet}.

\section{Mathematical Background}

Here we state some number-theoretic results that we will need in the sequel. 

First stated in the 1960s, Schanuel's conjecture is a unifying conjecture in transcendental number theory that generalizes many of the classical results in the field.

\begin{conjecture}[Schanuel's conjecture \cite{Lang1966}]
	 If $\alpha_1, \dots, \alpha_k \in \C$ are rationally linearly independent, then some $k$-element subset of $\set{\alpha_1, \dots, \alpha_k, e^{\alpha_1}, \dots, e^{\alpha_k}}$ is algebraically independent.
\end{conjecture}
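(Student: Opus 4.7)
The statement is Schanuel's Conjecture itself, which is one of the most celebrated open problems in transcendental number theory; having stood for over sixty years with no known proof, it would be dishonest to propose a plan that promises success. What I can sketch is the landscape of partial progress and the most plausible routes one would try.

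First I would catalogue the known special cases so as to identify where the real difficulty sits. When every $\alpha_i$ is algebraic the conjecture reduces to the Lindemann--Weierstrass theorem, which guarantees that $e^{\alpha_1},\dots,e^{\alpha_k}$ are algebraically independent and so form the desired $k$-element subset. Dually, when the $\alpha_i$ are chosen so that each $e^{\alpha_i}$ is algebraic, the conjecture is subsumed by the Gelfond--Schneider theorem and Baker's theorem on linear forms in logarithms. Any attempt at the full statement must address the transcendental middle ground between these extremes, and this is exactly where every known technique stalls; a natural first step is to push Chudnovsky-style algebraic independence criteria together with Philippon-type zero estimates out of the known regimes and into this middle ground.

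Second, I would invoke Ax's theorem, the unconditional analogue of Schanuel's Conjecture in the differential-algebraic setting, valid for formal power series and for complex analytic functions on a simply connected domain. The natural plan is to transfer Ax's argument to arithmetic, replacing the differential notion of genericity by a number-theoretic one. Zilber's programme on \emph{pseudo-exponential fields} makes this precise: one constructs a canonical algebraically closed exponential field in which Schanuel's statement holds by design, and conjectures that $(\C,+,\cdot,\exp)$ is isomorphic to it. Proving the isomorphism would entail Schanuel, but current results verify it only for restricted reducts.

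The hard part, which really is the whole problem, is that producing algebraic independence among transcendentals requires an arithmetic input that no existing tool supplies once one leaves the algebraic regime. In the context of the present paper Schanuel appears only as a hypothesis invoked to decide the first-order theory of $(\R,+,\cdot,\exp)$ via Macintyre--Wilkie; I would therefore not attempt to remove the assumption but instead flag it explicitly in the statement of the main theorems. If forced to aim for a genuine partial result, the most realistic target is a small transcendence-degree lower bound for an explicit finite set such as $\set{\pi,e,e^\pi}$, and even that would constitute a major advance.
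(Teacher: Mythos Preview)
Your assessment is correct and, in fact, matches the paper exactly: the paper does not attempt to prove Schanuel's Conjecture at all.  It is stated purely as a standing hypothesis, invoked (via Richardson's algorithm and the Macintyre--Wilkie decidability result) to obtain conditional decidability of the membership problems treated later.  There is therefore no ``paper's own proof'' to compare your proposal against; the paper leaves the conjecture open, as does the rest of mathematics.

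Your survey of Lindemann--Weierstrass, Baker, Ax's differential analogue, and Zilber's pseudo-exponentiation programme is accurate and appropriate context, but none of it is needed for the present paper, which treats Schanuel strictly as a black-box assumption.  Your instinct to ``flag it explicitly in the statement of the main theorems'' is precisely what the authors do.
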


An elementary point is an element of $\mathbb{C}^n$ that is an isolated, nonsingular solution of $n$ equations in $n$ variables, with each equation being either of the form $p = 0$, where $p$ is a polynomial in $\Q[x_1,\dots,x_n]$, or of the form $x_j - e^{x_i} = 0$. An elementary number is the polynomial image of an elementary point. 

Intuitively, an elementary number is obtained by starting with the rationals and implicitly or explicitly applying addition, subtraction, multiplication, division, exponentiation, and taking natural logarithms. In constrast to the case with algebraic numbers, deciding whether an elementary number is equal to zero is not straightforward. 

\begin{proposition}[Richardson \cite{RICHARDSON1997627}]
\label{richardson}
	The problem of determining zeroness of an elementary number is semi-decidable.  The problem is moreover decidable if one assumes 
 Schanuel's conjecture.
\end{proposition}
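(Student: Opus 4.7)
My plan is to translate the statement $\alpha = 0$ for a given elementary number $\alpha$ into a closed first-order sentence in the language of the real exponential field $\langle \R; +, \cdot, <, \exp\rangle$ (possibly expanded by restricted $\sin$ and $\cos$ to accommodate complex exponents), and then appeal to existing model-theoretic results for this structure. Given the defining data of $\alpha$, namely a system $F(\boldsymbol{x}) = 0$ of $n$ equations over $\C^n$ (each either a rational polynomial or of the form $x_j - e^{x_i}$) admitting an isolated nonsingular solution $\boldsymbol{\alpha}$ with $\alpha = p(\boldsymbol{\alpha})$, I would decompose each complex variable into real and imaginary parts, unfolding every constraint of the form $x_j = e^{x_i}$ via $e^{a+ib} = e^a(\cos b + i \sin b)$. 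This yields a real system of $2n$ equations in $2n$ real unknowns. The isolated-nonsingular hypothesis then becomes the non-vanishing of the real Jacobian at the real analogue of $\boldsymbol{\alpha}$, which remains expressible in the chosen language because the partial derivatives of $\exp$, $\sin$, and $\cos$ stay in that class. Altogether, $\alpha = 0$ is equivalent to an existential sentence $\varphi$ asserting the existence of a real point satisfying the unfolded system, the Jacobian nondegeneracy, and the vanishing of both the real and imaginary parts of $p$.

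Semi-decidability of zeroness then follows from the standard fact that the set of true existential sentences in any sound, recursively axiomatised first-order structure is recursively enumerable: proof search eventually certifies $\varphi$ whenever $\alpha = 0$. For the decidability statement under Schanuel's conjecture, I would invoke the theorem of Macintyre and Wilkie, giving decidability of the complete theory of $\langle \R; +, \cdot, <, \exp\rangle$ (and, in its strengthened form, of an o-minimal expansion accommodating $\sin$ and $\cos$) subject to Schanuel; deciding $\varphi$ then decides whether $\alpha = 0$.

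The main obstacle is the careful handling of complex elementary points: the decomposition into real and imaginary parts introduces $\sin$ and $\cos$, which are not part of the bare real-exponential language, so one must either argue that only bounded arguments arise and import a suitable restricted-analytic expansion, or reduce to a real setting from the outset. A subsidiary technicality is verifying that the isolated-nonsingular condition is itself definable by a first-order formula in the chosen structure, but this is routine since the relevant Jacobian entries are built from the same stock of definable functions.
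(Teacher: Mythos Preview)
The paper does not give its own proof of this proposition; it simply cites Richardson. So there is nothing in the paper to compare your argument against directly. That said, your proposal has two genuine gaps that would prevent it from going through as written.

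\textbf{Semi-decidability.} Your argument is that once $\alpha=0$ is rewritten as an existential sentence $\varphi$ over $\langle \R;+,\cdot,<,\exp,\sin,\cos\rangle$, proof search in ``any sound, recursively axiomatised'' theory will eventually certify $\varphi$. But proof search enumerates the \emph{provable} sentences of the chosen axiom system, not the \emph{true} ones; these coincide only if the axiomatisation is complete. Unconditionally we have no recursive complete axiomatisation even of $\mathrm{Th}(\R,\exp)$, let alone of an expansion by trigonometric functions. So this step does not establish semi-decidability. Richardson's actual route is different: he interleaves interval-arithmetic evaluation (which certifies $\alpha\neq 0$ whenever that holds) with a symbolic procedure based on Wu's characteristic-set method that searches for an algebraic witness forcing $\alpha=0$; the analysis of the latter is where Schanuel enters.

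\textbf{Decidability via Macintyre--Wilkie.} Your reduction unfolds complex exponentials as $e^{a+ib}=e^a(\cos b + i\sin b)$, so the sentence $\varphi$ you produce lives in an expansion of $\R_{\exp}$ by (at least restricted) $\sin$ and $\cos$. The Macintyre--Wilkie theorem, however, concerns only $\mathrm{Th}(\R,\exp)$; it does not give decidability for expansions by restricted analytic functions, and no ``strengthened form'' covering $\sin,\cos$ is known. You flag this as a technicality, but it is exactly the crux: without eliminating the trigonometric functions you cannot invoke Macintyre--Wilkie, and there is no obvious way to eliminate them while still expressing the isolated-nonsingular complex solution. Richardson's algorithm sidesteps this entirely by working directly with the defining polynomial-exponential system rather than passing through a first-order theory.
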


We will also need the following theorem due to Masser.
\begin{theorem}[Multiplicative relations among algebraic numbers \cite{Masser1988})]
	Let $m$ be fixed, and let $\lambda_1, \dots, \lambda_m$ be complex algebraic numbers. Consider the free abelian group $L$ under addition given by
	$$L = \set{(v_1, \dots, v_m) \in \Z^m : \lambda_1^{v_1} \dots \lambda_m^{v_m} = 1}.$$
	$L$ has a basis $\set{\vec{\ell_1}, \dots, \vec{\ell_p}} \subseteq \Z^m$ (with $p \leq m$), where the entries of each of the $\vec{\ell_j}$ are all polynomially bounded in 
 the sum of the heights and degrees of the minimal polynomials of  $\lambda_1,\ldots,\lambda_m$.
\end{theorem}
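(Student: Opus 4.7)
The plan is to split the task into two parts: (i) produce $p$ linearly independent short vectors in $L$, thereby bounding the successive minima, and (ii) upgrade these short vectors into a genuine basis of $L$ while preserving polynomial control on the entries. Part (ii) is standard geometry of numbers: given linearly independent $v_1,\ldots,v_p \in L$ of sup-norm at most $B$, one invokes Minkowski's second theorem together with a Hermite normal form reduction to obtain an actual basis of $L$ whose entries are bounded by $B^{O(m)}$; since polynomial bounds are preserved under the $O(m)$-th power (after passing to logarithms of the input parameters), this reduction does not affect the conclusion. The real content is therefore to bound the successive minima of $L$ inside $\mathbb{Z}^m$.

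For part (i), the natural strategy is a counting argument driven by the theory of heights. Set $H_0 = \sum_i \log H(\lambda_i)$ and $D_0 = \sum_i \deg(\lambda_i)$. If $v \in \mathbb{Z}^m$ has $\|v\|_\infty \leq N$, then the product $\lambda_1^{v_1}\cdots\lambda_m^{v_m}$ is an algebraic number of degree at most $\prod_i \deg(\lambda_i)$ and logarithmic height at most $N \cdot H_0$. Two distinct vectors in the box $\{\|v\|_\infty \leq N\}$ that yield the same product generate a nonzero element of $L$ of sup-norm at most $2N$. Iterating the argument, one extracts $p$ linearly independent short relations by restricting each time to an integer hyperplane transverse to the relations already found and rerunning the collision argument.

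The principal obstacle is achieving the \emph{polynomial} bound that Masser asserts, rather than a merely exponential one. A naive pigeonhole against Northcott's theorem (which counts algebraic numbers of bounded height and degree) gives only exponential bounds, because that count itself grows exponentially in the height. To sharpen to polynomial dependence on $H_0$ and $D_0$ one has to appeal to effective transcendence estimates, specifically lower bounds on linear forms in logarithms of the Baker--W\"ustholz type. These force any nontrivial $\lambda_1^{v_1}\cdots\lambda_m^{v_m}$ that lies extremely close to $1$ to satisfy a polynomial lower bound on $\|v\|_\infty$, and, transferred through the geometry-of-numbers framework of Step (ii), this yields the required quantitative bound on the successive minima. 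This transcendence input is where the genuine difficulty lies; the rest of the proof is bookkeeping via Minkowski and lattice reduction.
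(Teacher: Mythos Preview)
The paper does not prove this theorem at all: it is stated in the ``Mathematical Background'' section purely as a cited result of Masser~\cite{Masser1988} and is used later as a black box.  There is therefore no proof in the paper to compare your proposal against.

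Since you nonetheless sketched an argument, a brief comment on its content may be useful.  Your decomposition into (i) bounding the successive minima of $L$ and (ii) passing from short independent vectors to a short basis via Hermite normal form is exactly the right shape, and (ii) is indeed routine.  Where your sketch goes astray is in identifying the analytic input needed for (i).  Masser's polynomial bound does \emph{not} rely on Baker--W\"ustholz lower bounds for linear forms in logarithms.  The actual mechanism is to embed the multiplicative group generated by $\lambda_1,\ldots,\lambda_m$ (modulo roots of unity) into a finite-dimensional real vector space using the logarithmic Weil height at all places, observe that the image of $\mathbb{Z}^m/L'$ (where $L'$ is the group of relations up to torsion) is a lattice whose covolume is controlled polynomially by the heights and degrees, and then apply Minkowski's second theorem directly to that lattice.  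A Dobrowolski-type lower bound for the height of a non-torsion algebraic number is the only arithmetic input beyond elementary height inequalities; linear forms in logarithms are considerably heavier machinery than what is required, and invoking them would obscure why the bound is polynomial rather than provide the reason for it.
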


\section{Linear Recurrence Sequences}
First, we recall some basic terminology and results about
linear recurrence sequences.

A sequence $\boldsymbol{u} = (u_n)_{n\geq 0}$ of elements of a
semiring $K$ is called \emph{$K$-rational} if there exists $d\geq 1$,
$v,w \in K^d$ and $M\in K^{d\times d}$ such that $u_n=v^\top M^n w$
for all $n$.  When $K$ is a field, a sequence is $K$-rational if and
only if it satisfies a linear recurrence relation \[ u_n = a_1u_{n-1}
+\cdots a_du_{n-d} \quad (n\geq d) \] where $a_1,\ldots,a_d \in K$.
In this case we also call $\boldsymbol{u}$ a \emph{linear recurrence
sequence (LRS).}   

With the unique minimal order recurrence satisfied by $\boldsymbol{u}$
we associate the \emph{characteristic polynomial}
\[ P(X) = X^d - a_1X^{d-1} - \cdots - a_d \, . \] The roots of $P(X)$
are called the \emph{characteristic roots} of $\boldsymbol{u}$.
Writing $\lambda_1,\ldots,\lambda_m$ for the distinct characteristic
roots, in non-increasing order of absolute value, the sequence
$\boldsymbol{u}$ admits a closed-form representation:
\[ u_n = \sum_{i=1}^m P_i(n) \lambda_i^n \, ,\] where the $P_i$ are
univariate polynomials whose coefficients are algebraic over $K$.  We
say that $\boldsymbol{u}$ is \emph{non-degenerate} if no quotient of
two distinct characteristic roots is a root of unity.  We also say
that a matrix $M \in \Q^{d\times d}$ is non-degenerate if no quotient
of two distinct eigenvalues is a root of unity.

In this paper we exclusively consider sequences with rational entries.
We say that an LRS $\boldsymbol{u}$ is \emph{dominated} if $\lambda_1$
is the unique characteristic root of maximum modulus.  Note that in
this case $\lambda_1$ is necessarily real.  We have the following three
straightforward propositions concerning dominated LRS.

\begin{proposition}
If an LRS $\boldsymbol{u}$ is dominated then $\{ n \in
\mathbb{N}: u_n \geq 0\}$ is an effectively computable ultimately
periodic set.
\label{prop:DOM1}
\end{proposition}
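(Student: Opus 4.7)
The plan is to exploit the closed form
\[
u_n = P_1(n)\lambda_1^n + \sum_{i=2}^m P_i(n)\lambda_i^n,
\]
in which, by the dominance hypothesis, $\lambda_1$ is real and $|\lambda_i|<|\lambda_1|$ for $i\geq 2$. If $\lambda_1=0$ the sequence is eventually zero and the claim is trivial, so I assume $\lambda_1\neq 0$. Factoring out $P_1(n)\lambda_1^n$ gives
\[
u_n = P_1(n)\lambda_1^n\Bigl(1 + \sum_{i\geq 2}\frac{P_i(n)}{P_1(n)}\Bigl(\frac{\lambda_i}{\lambda_1}\Bigr)^{\!n}\Bigr),
\]
and the parenthesised error term tends to $0$ geometrically because each $|\lambda_i/\lambda_1|<1$ dominates the polynomially-growing factor $P_i(n)/P_1(n)$. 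Hence, for all sufficiently large $n$, the sign of $u_n$ coincides with the sign of the dominant term $P_1(n)\lambda_1^n$.

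Once the sign is controlled by the dominant term, the ultimate-periodicity structure is immediate. The polynomial $P_1$ has real algebraic coefficients (as $\lambda_1$ is real and the other characteristic roots come in conjugate pairs contributing conjugate polynomials), so the sign of $P_1(n)$ is eventually constant, equal to the sign of its leading coefficient. The sign of $\lambda_1^n$ is constant if $\lambda_1>0$ and alternating if $\lambda_1<0$. Combining these observations, the set $\{n\in\mathbb{N}:u_n\geq 0\}$ is, from some threshold $N$ onwards, either all of $\mathbb{N}_{\geq N}$, empty, or one of the two congruence classes modulo $2$, so it is ultimately periodic with period $1$ or $2$.

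For effectivity I would compute an explicit $N$ beyond which the dominant term controls the sign. Using standard effective algebraic number theory, one can compute lower bounds of the form $|P_1(n)\lambda_1^n|\geq c_1\,n^{d_1}\,|\lambda_1|^n$ (for $n$ past the largest real root of $P_1$) and upper bounds $\bigl|\sum_{i\geq 2}P_i(n)\lambda_i^n\bigr|\leq c_2\,n^{d_2}\,|\lambda_2|^n$, where $c_1,c_2,d_1,d_2$ are computable from the closed form via effective manipulations of algebraic numbers (e.g.\ Mignotte-style root separation and effective bounds on leading coefficients). Solving $c_1 n^{d_1}|\lambda_1|^n > c_2 n^{d_2}|\lambda_2|^n$ yields an explicit $N$, since $|\lambda_1|>|\lambda_2|$.

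With $N$ in hand, the set $\{n<N:u_n\geq 0\}$ is computed by direct evaluation of the rational terms $u_n$, while the tail $\{n\geq N:u_n\geq 0\}$ is given by one of the four easy patterns above, determined by the computable signs of $\lambda_1$ and of the leading coefficient of $P_1$. The main obstacle is the effectivity step: to make the threshold $N$ genuinely computable one needs explicit, quantitative control over the algebraic constants appearing in $P_1,\ldots,P_m$, which I would handle by extracting these from the canonical Jordan decomposition of the companion matrix and applying standard effective bounds for heights of algebraic numbers obtained by polynomial arithmetic.
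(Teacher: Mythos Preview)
Your proof is correct and follows essentially the same approach as the paper: factor out the dominant term $P_1(n)\lambda_1^n$, observe that the remaining contribution is $o(1)$ after normalisation, and conclude that for large $n$ the sign of $u_n$ is determined by the sign of the leading coefficient of $P_1$ and the parity of $n$ (via the sign of $\lambda_1$). You supply considerably more detail on the effective computation of the threshold $N$ than the paper does, but the underlying argument is the same.
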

\begin{proof}
Consider the closed form representation
$u_n = \sum_{i=1}^m P_i(n)\lambda_i^n$ with unique
dominant root $\lambda_1$, necessarily real.  Suppose that $P_1$ has degree $k$ and
leading coefficient $a$.  Then we have
$\frac{u_n}{n^k \left|\lambda_1\right|^n} = a(\lambda_1/\left|\lambda_1\right|)^n +
o(1)$.  Hence for sufficiently large $n$ the sign of $u_n$ is
a function of the parity of $n$.
  \end{proof}

\begin{proposition}
If $\boldsymbol{u}$ is a non-degenerate LRS such that some subsequence
$(u_{cn+d})_{n \geq 0}$ is dominated, where $c$ is a positive integer
and $d\in \{0,1\ldots,c-1\}$, then $\boldsymbol{u}$ itself is
dominated.
\label{prop:DOM2}
\end{proposition}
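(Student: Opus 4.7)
The plan is to relate the closed-form representation of $\boldsymbol{u}$ to that of the subsequence $(u_{cn+d})_{n \geq 0}$ and then exploit non-degeneracy. Starting from the closed form $u_n = \sum_{i=1}^m P_i(n)\lambda_i^n$ with distinct characteristic roots $\lambda_1,\ldots,\lambda_m$ (in non-increasing order of modulus) and nonzero polynomial coefficients $P_i$, the substitution $n \mapsto cn+d$ yields
$$u_{cn+d} = \sum_{i=1}^m \bigl(\lambda_i^d \, P_i(cn+d)\bigr) \, (\lambda_i^c)^n.$$

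The first step is to show that the scalars $\lambda_1^c,\ldots,\lambda_m^c$ are pairwise distinct: if $\lambda_i^c = \lambda_j^c$ for some $i \neq j$, then $\lambda_i/\lambda_j$ would be a $c$-th root of unity, directly contradicting non-degeneracy of $\boldsymbol{u}$. (A zero characteristic root, if present, contributes only to finitely many terms and may be discarded.) The second step is to note that each polynomial coefficient $\lambda_i^d P_i(cn+d)$ remains nonzero as a polynomial in $n$, since $c \geq 1$ preserves the degree of $P_i$ under affine substitution and $\lambda_i \neq 0$. Hence the displayed expression is the canonical closed form of the subsequence, whose characteristic roots are exactly $\{\lambda_1^c,\ldots,\lambda_m^c\}$, with moduli $|\lambda_i|^c$ preserving the original ordering by strict monotonicity of $t \mapsto t^c$ on positive reals.

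The final step is a contrapositive argument. Suppose $\boldsymbol{u}$ were not dominated; then at least two distinct roots $\lambda_i,\lambda_j$ would achieve the maximum modulus among the $\lambda_k$. Their $c$-th powers would then both achieve the maximum modulus among the $\lambda_k^c$ and, by the first step, remain distinct. Thus the subsequence $(u_{cn+d})_{n\geq 0}$ would have at least two distinct characteristic roots of maximum modulus and so fail to be dominated, contradicting the hypothesis. Therefore $\boldsymbol{u}$ is dominated.

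The only point of substance in this argument is the invocation of non-degeneracy to guarantee that the map $\lambda \mapsto \lambda^c$ is injective on the set of characteristic roots of $\boldsymbol{u}$; once this injectivity is in hand, the rest is a direct calculation on closed forms. Hence I do not anticipate any essential obstacle beyond cleanly handling the (inessential) case of a zero characteristic root.
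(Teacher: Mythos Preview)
Your proposal is correct and follows essentially the same approach as the paper: derive the closed form of the subsequence via the substitution $n\mapsto cn+d$, use non-degeneracy to ensure the $\lambda_i^c$ are pairwise distinct (so the displayed expression is the canonical closed form), and conclude that a unique dominant root among the $\lambda_i^c$ forces a unique dominant root among the $\lambda_i$. The only cosmetic differences are that you phrase the final step as a contrapositive rather than directly, and you explicitly dispose of a possible zero characteristic root (a point the paper's proof tacitly assumes away when asserting all $Q_i$ are nonzero).
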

\begin{proof}
The sequence $\boldsymbol{u}$ admits a closed-form representation 
$u_n = \sum_{i=1}^m P_i(n)\lambda_i^n$, where 
$\lambda_1,\ldots,\lambda_m$ are the characteristic roots and 
$P_1,\ldots,P_m$ 
are polynomials.  Then
\begin{eqnarray*}
u_{cn+d} &=& \sum_{i=1}^m P_i(cn+d)\lambda_i^{cn+d}\\
               &=& \sum_{i=1}^m Q_i(n) (\lambda_i^c)^n \, 
	       \end{eqnarray*}
	       where $Q_i(n):=P_i(cn+d)\lambda_i^d$ for $i\in 
	       \{1,\ldots,m\}$.

Note that the polynomials $Q_1,\ldots,Q_m$ are non-zero and, by
non-degeneracy of $\boldsymbol{u}$, the numbers
$\lambda_1^c,\ldots,\lambda_m^c$ are pairwise distinct.  Since the
sequence $(u_{cn+d})_{n\geq 0}$ is dominated, we have that
$\lambda_1^c$ is its unique characteristic root of maximum modulus.
But then $\lambda_1$ is the unique characteristic root of $\boldsymbol{u}$
\end{proof}

\begin{proposition}
  An LRS that is both non-degenerate and rational over the semiring
  $\mathbb{Q}_+$ of nonnegative rational numbers is
  dominated.
  \label{prop:DOM3}
\end{proposition}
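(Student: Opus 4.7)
The plan is to argue by contradiction: suppose $\boldsymbol{u}$ is non-degenerate and $\mathbb{Q}_+$-rational but has at least two distinct characteristic roots $\lambda_1 \neq \lambda_2$ of the same maximum modulus $L$. I aim to show that then $\lambda_1/\lambda_2$ must be a root of unity, contradicting non-degeneracy. (The case $L = 0$ is trivial, since then $\boldsymbol{u}$ is eventually zero and is dominated vacuously.)

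First, I would establish that $L$ itself is a characteristic root, and hence in particular is real and positive. Consider the generating function $U(z) = \sum_{n \geq 0} u_n z^n$, which by rationality of $\boldsymbol{u}$ is a rational function of $z$ whose poles lie at the reciprocals of the characteristic roots of $\boldsymbol{u}$; thus the radius of convergence is $R = 1/L$. Since $u_n \geq 0$ for all $n$, Pringsheim's theorem forces $z = R$ to be a singularity of $U$, and being rational this singularity must in fact be a pole. Hence $L = 1/R$ is a characteristic root, and without loss of generality $\lambda_1 = L > 0$.

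Next, I would invoke Perron--Frobenius theory on a nonnegative representation $u_n = \boldsymbol{v}^{\top} M^n \boldsymbol{w}$. Putting $M$ into Frobenius normal form yields a block upper triangular matrix with irreducible (or zero) diagonal blocks $B_1, \ldots, B_s$; by Perron--Frobenius the peripheral spectrum of each $B_j$ is precisely $\{\rho(B_j)\,\omega : \omega^{h_j} = 1\}$ for some integer $h_j \geq 1$. Granting the reduction $\rho(M) = L$, every eigenvalue of $M$ of modulus $L$ is then a peripheral eigenvalue of some block with $\rho(B_j) = L$, and therefore has the form $L\,\omega$ for a root of unity $\omega$. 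Since every characteristic root of $\boldsymbol{u}$ is an eigenvalue of $M$, in particular $\lambda_2 = L\,\omega$, yielding the desired contradiction $\lambda_1/\lambda_2 = \omega^{-1}$.

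The main obstacle is justifying the reduction to $\rho(M) = L$. Any block $B_j$ with $\rho(B_j) > L$ would, by Perron--Frobenius, contribute strictly positive terms of order $\rho(B_j)^n$ to the entries of $M^n$ indexed by that block; nonnegativity of $\boldsymbol{v}, \boldsymbol{w}, M$ precludes cancellation between contributions of different blocks, so if such a $B_j$ were simultaneously accessible from the support of $\boldsymbol{v}$ and coaccessible to the support of $\boldsymbol{w}$, the quantity $u_n$ would grow faster than $L^n$--contradicting the fact that $L$ is the maximum-modulus characteristic root. Hence all such blocks can be excised from $M$ without altering $\boldsymbol{u}$. Making this reachability bookkeeping in the Frobenius normal form fully rigorous is the sole technical subtlety; the remaining ingredients, namely Pringsheim's theorem and the peripheral-spectrum structure of irreducible nonnegative matrices, are entirely classical.
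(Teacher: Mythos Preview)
Your proposal is correct. The paper's main-text proof simply invokes Berstel's theorem that the maximum-modulus characteristic roots of a $\mathbb{Q}_+$-rational sequence all have the form $\rho\omega$ with $\omega$ a root of unity, and then applies non-degeneracy; you essentially re-prove that structural result from scratch. The paper's appendix gives an alternative Perron--Frobenius argument very close in spirit to yours---both pass to the Frobenius normal form and analyse the irreducible diagonal blocks---but the appendix works directly with asymptotics (using primitivity, granted by non-degeneracy, to obtain $B_{l,l}^n/r_l^n \to C_l > 0$ and then summing contributions), whereas you argue via the cyclic peripheral-spectrum structure to conclude $\lambda_2 = L\omega$. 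Your invocation of Pringsheim to pin down $\lambda_1 = L$ is a nice touch but not strictly needed, since the Perron--Frobenius step already gives that \emph{every} maximum-modulus eigenvalue of the trimmed representation has the form $L\omega$; the trimming you flag as the only subtlety is the standard accessible/coaccessible reduction for nonnegative weighted automata and presents no real difficulty.
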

\begin{proof}
Berstel~\cite{Berstel1971} showed that if a sequence $\boldsymbol{u}$ is 
$\mathbb{Q}_+$-rational then its characteristic roots of maximum
modulus all have the form $\rho \omega$ for some non-negative real
number $\rho$ and root of unity $\omega$.  Since $\boldsymbol{u}$ is
non-degenerate it follows that it has a unique dominant root. For an exposition, see \cite[Chap. 8, Thm 1.1]{berstel2011noncommutative}. We provide an alternate proof based on the Perron-Frobenius theorem in Appendix \ref{sec:positive-dom}.
\end{proof}

\begin{theorem}
Given $M \in \Q^{d\times d}$ the set $S:=\{ n \in \mathbb{N} :
M^n \geq 0\}$ is ultimately periodic and effectively computable.
\label{thm:powers}
\end{theorem}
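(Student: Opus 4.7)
The plan is to reduce to the non-degenerate case by splitting indices modulo an effectively computable period $c$, and then apply Propositions~\ref{prop:DOM1}--\ref{prop:DOM3} entry-by-entry. Using Masser's theorem to detect all root-of-unity relations among ratios of eigenvalues of $M$, I first compute a positive integer $c$ such that $M^c$ is non-degenerate. Writing $n = cm + r$ with $r \in \{0, \dots, c-1\}$, one has $S = \bigcup_{r=0}^{c-1} \{cm + r : m \in T_r\}$ where $T_r := \{m \geq 0 : M^{cm+r} \geq 0\}$, so it suffices to show each $T_r$ is effectively computable and ultimately periodic.

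Fix $r$, and for each pair of indices $(i,i')$ consider the scalar LRS $u^{(i,i')}_m := (M^{cm+r})_{i,i'}$ in the variable $m$. Since $M^{cm+r} = M^{r}(M^c)^m$, the characteristic roots of $u^{(i,i')}$ lie among the eigenvalues of $M^c$, so $u^{(i,i')}$ is non-degenerate. I would then establish the following key lemma: \emph{if $T_r$ contains some $m_0$ with $cm_0 + r \geq 1$, then every LRS $u^{(i,i')}$ is dominated.} Granting the lemma, dominance of each $u^{(i,i')}$ is decidable from the eigenvalue data and closed-form coefficients associated with $M$; the algorithm for $T_r$ then reads: check whether every entry is dominated, and if so apply Proposition~\ref{prop:DOM1} to each and intersect; otherwise, by the contrapositive of the lemma, $T_r = \emptyset$ when $r \geq 1$, and $T_r = \{0\}$ when $r = 0$ (using $M^0 = I \geq 0$). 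Finite unions and intersections of effective ultimately periodic sets are effective ultimately periodic, yielding $S$.

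To prove the lemma, set $N := cm_0 + r \geq 1$ and $D := M^N \geq 0$, and consider the arithmetic-progression subsequence $m_k := m_0 + kN$ for $k \geq 0$. A direct computation gives $cm_k + r = N(1 + ck)$, whence $u^{(i,i')}_{m_k} = (D^{1+ck})_{i,i'} = e_i^\top D \,(D^c)^k\, e_{i'}$. Because $D \geq 0$, this subsequence is $\Q_+$-rational in $k$. Its characteristic roots lie among the eigenvalues of $D^c = (M^c)^N$, which have the form $\mu^N$ for eigenvalues $\mu$ of $M^c$; non-degeneracy of $M^c$ implies that distinct such $\mu$ give distinct $\mu^N$ and that the ratios $(\mu_1/\mu_2)^N$ are not roots of unity, so the subsequence is also non-degenerate. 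Proposition~\ref{prop:DOM3} now yields that this subsequence is dominated, and Proposition~\ref{prop:DOM2}, applied to the non-degenerate LRS $u^{(i,i')}$ with its dominated subsequence along the arithmetic progression $m = Nk + m_0$, yields that $u^{(i,i')}$ itself is dominated.

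The main obstacle is the design of this subsequence: it must simultaneously secure $\Q_+$-rationality (forcing the use of powers of the nonnegative matrix $D$) and non-degeneracy (so that Proposition~\ref{prop:DOM3} fires), and the transfer of non-degeneracy from $M^c$ through to $D^c$ is what the carefully chosen $c$ guarantees. Minor adjustments are needed to match the $(Cn+D)$-shape demanded by Proposition~\ref{prop:DOM2} in boundary cases such as $c=1,\,r=0,\,m_0\geq 1$ (where one instead takes $m_k = km_0$ so that $D=0<C=m_0$), but these involve no new ideas; the remaining manipulations on algebraic numbers are routine.
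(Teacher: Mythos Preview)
Your argument is correct and is essentially the paper's own proof: the period $c$ is the paper's $L$, your $T_r$ is the paper's $S_l$, and your key subsequence $u^{(i,i')}_{m_0+Nk}=(D^{1+ck})_{i,i'}$ with $N=cm_0+r$ and $D=M^N$ is exactly the paper's $v^{(i,j)}_n=(M^{n_0(1+Ln)})_{i,j}$ after the change of variables $n_0=Lm_0+l$. The only difference is cosmetic packaging plus your explicit handling of the boundary cases $r=0,\,m_0=0$ and $c=1$, which the paper glosses over.
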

\begin{proof}
  Recall that for some (effectively computable) strictly positive
  integer $L$ the matrix $M^L$ is non-degenerate.  It will suffice to
  show that for each $l \in \{0,\ldots,L-1\}$ we can compute the set
  $S_l := \{ n \in S : n\equiv l \bmod L\}$.  Our procedure to do this
  is as follows.  First, check for every pair of indices 
  $i,j\in\{1,\ldots,d\}$, whether the sequence $(u^{(i,j)}_n)_{n\geq
    0}$ given by $u^{(i,j)}_n = (M^{Ln+l})_{i,j}$,  is dominated.  If
  yes then by Proposition~\ref{prop:DOM1} we
  can compute $S_l$ as the intersection over all pairs $(i,j)$ of the sets $\{ n\in\mathbb{N} :
  u^{(i,j)}_n \geq 0\}$.
  If no, then we claim that $S_l$ is empty.

  Indeed, suppose
  $n_0 \in S_l$.   Then for each pair of indices 
  $i,j\in\{1,\ldots,d\}$, the LRS $(v^{(i,j)}_n)_{n\geq 0}$ defined by
  $v^{(i,j)}_n = (M^{n_0(1+Ln)})_{i,j}=e_i^\top M^{n_0} (M^{Ln_0})^n e_j^\top$ is both non-degenerate and
  $\mathbb{Q}^+$-rational.  By Proposition~\ref{prop:DOM3} each sequence
  $(v^{(i,j)}_n)_{n\geq 0}$ is dominated.  Moreover, since
  $(v^{(i,j)}_n)_{n\geq 0}$ is a subsequence of
  $(u^{(i,j)}_n)_{n\geq 0}$, the latter is also dominated by Proposition~\ref{prop:DOM2}.
This proves (the contrapositive of) our claim.
  \end{proof}

  \begin{remark}
    We can extract from the proof of Theorem~\ref{thm:powers} an
    effective characterisation of those matrices $M$ such that $M^n
    \geq 0$ for some positive integer $n$.
Let $L$ be the least
    positive integer such that $M^L$ is non-degenerate.  Then some
    positive power of $M$ is a non-negative matrix iff some positive power of $M^L$ is
    non-negative iff for all indices $(i,j)$ the sequence
    $(u^{(i,j)}_n)_{n\geq 0}$ defined by
    $u^{(i,j)}_n:= (M^{Ln})_{i,j}$ is dominated and not ultimately negative.
\label{rem:non-negative-power}
  \end{remark}

\section{The Positive Membership Problem}

\begin{problem}[Positive Membership for Commutative Semigroup]
	Given a set of commuting $d \times d$ matrices $\set{A_1, \dots, A_k}$ with rational entries, decide whether the semigroup generated by multiplying these matrices together contains a matrix with all its entries strictly greater than zero.
\end{problem}

\begin{definition}[Eventually Positive Matrix]
	We call a matrix $M$ \emph{eventually positive} if there exists a natural number $n_0$ such that for all $n \geq n_0$, every entry of the matrix $M^n$ is greater than zero.
\end{definition}

We will need the following result adapted from Noutsos \cite{noutsos2006perron} which characterizes eventual positivity by proving a converse to the Perron-Frobenius theorem.

\begin{definition}[strong Perron-Frobenius property]
	A matrix $A \in \R^{n \times n}$ has the \emph{strong Perron-Frobenius property} if there exists an eigenvalue $\lambda_1$ with the following properties:
	\begin{enumerate}
		\item $\lambda_1$ is strictly dominant
		\item $\lambda_1$ is real and positive
		\item $\lambda_1$ is a simple eigenvalue
		\item the corresponding eigenvector $v^{(1)}$ can be chosen to have all positive entries.
	\end{enumerate}
\end{definition}

Note that Noutsos does not explicitly include simple but this is necessary because otherwise the identity matrix is a counterexample to the theorem characterizing eventual positivity.

\begin{theorem}[Converse to Perron-Frobenius \cite{noutsos2006perron}]
	Let $A$ be a matrix satisfying the strong Perron-Frobenius property with dominant eigenvalue $\lambda_1$ and positive eigenvector $p$ such that $A^\top$ also satisfies the strong Perron-Frobenius property with positive eigenvector $q$. Then $\lim_{n \rightarrow \infty} (A/\lambda_1)^n = \frac{pq^\top}{q^\top p} > 0$. 
\end{theorem}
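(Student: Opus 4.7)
The plan is to establish the limit via the spectral decomposition coming from the Jordan form of $A$. Since $\lambda_1$ is a simple, strictly dominant eigenvalue, one has a direct-sum decomposition $\R^n = \vspan(p) \oplus W$, where $W$ is the sum of the generalized eigenspaces attached to the other eigenvalues of $A$, all of which have modulus strictly less than $\lambda_1$. Both summands are $A$-invariant. Let $P$ denote the projection onto $\vspan(p)$ along $W$, and set $B := A - \lambda_1 P$. Then $AP = PA = \lambda_1 P$, $PB = BP = 0$, and the spectrum of $B$ consists of exactly the non-dominant eigenvalues of $A$, so its spectral radius satisfies $\rho(B) < \lambda_1$. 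A short induction then yields $A^n = \lambda_1^n P + B^n$, so $(A/\lambda_1)^n = P + (B/\lambda_1)^n \to P$ as $n \to \infty$.

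It remains to identify $P$ with the rank-one matrix $M := pq^\top/(q^\top p)$. The scalar $q^\top p$ is strictly positive because $p$ and $q$ are, so $M$ is well defined. The identity $Mp = p$ shows that $M$ agrees with $P$ on $\vspan(p)$. To show $M$ vanishes on $W$, take any generalized eigenvector $v$ of $A$ satisfying $(A - \mu I)^k v = 0$ for some eigenvalue $\mu \neq \lambda_1$; using the left-eigenvector relation $q^\top A = \lambda_1 q^\top$ (equivalent to $A^\top q = \lambda_1 q$), I compute $0 = q^\top (A - \mu I)^k v = (\lambda_1 - \mu)^k q^\top v$, forcing $q^\top v = 0$ and hence $Mv = 0 = Pv$. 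By linearity $M = P$, which together with the previous paragraph gives $\lim_{n \to \infty}(A/\lambda_1)^n = pq^\top/(q^\top p)$.

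Entrywise positivity of the limit is automatic: since $p$ and $q$ are entrywise positive by hypothesis, $pq^\top$ is an entrywise positive matrix and $q^\top p > 0$, making $M > 0$ componentwise. The only mildly subtle step is the biorthogonality argument used to obtain $q^\top v = 0$ on the non-dominant generalized eigenspaces, but that is essentially a one-line consequence of the left-eigenvector relation for $q$; everything else follows directly from the fact that $\lambda_1$ is a simple, strictly dominant eigenvalue.
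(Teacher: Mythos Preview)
Your argument is correct. Note, however, that the paper does not actually supply a proof of this theorem: it is quoted as a cited result from Noutsos~\cite{noutsos2006perron} and used as a black box for the characterisation of eventually positive matrices. So there is no ``paper's own proof'' to compare against here.

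Your route is the standard spectral-projection argument and is essentially what one finds in Noutsos. Two small points worth tightening. First, the spectrum of $B=A-\lambda_1 P$ is not quite ``exactly the non-dominant eigenvalues of $A$'': on $\vspan(p)$ the operator $B$ acts as zero, so $0$ is also an eigenvalue of $B$. This is harmless because $\lambda_1>0$, so one still has $\rho(B)<\lambda_1$, but the sentence as written is slightly off. Second, when $A$ has non-real eigenvalues the generalized eigenvectors $v$ in your biorthogonality step are complex, whereas you set up the decomposition in $\R^n$. The cleanest fix is to run the whole argument over $\C^n$ (the limit matrix $pq^\top/(q^\top p)$ is visibly real anyway), or to observe that $q^\top v=0$ for every complex generalized eigenvector with eigenvalue $\mu\neq\lambda_1$ forces $q^\top$ to vanish on the real $A$-invariant complement $W$. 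Neither point is a genuine gap; the proof stands.
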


This can be used to show:
\begin{theorem}[Characterizing eventual positivity \cite{noutsos2006perron}]
	A matrix $A$ is eventually positive iff $A$ and $A^\top$ both have the strong Perron-Frobenius property.
\end{theorem}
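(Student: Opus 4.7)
The plan is to prove the two directions separately. The reverse direction follows essentially from the preceding Converse to Perron--Frobenius theorem, while the forward direction requires analysing how the spectrum of $A^N$ relates to that of $A$ for $N$ large.

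For the reverse direction, assume both $A$ and $A^\top$ satisfy the strong Perron--Frobenius property; the dominant eigenvalue $\lambda_1>0$ is common to $A$ and $A^\top$ since they share the same spectrum, and we denote the corresponding positive eigenvectors by $p$ and $q$. The preceding theorem yields $\lim_{n\to\infty}(A/\lambda_1)^n = pq^\top/(q^\top p)$, which is strictly positive entrywise. Hence for all sufficiently large $n$ every entry of $(A/\lambda_1)^n$ is strictly positive, and since $\lambda_1^n>0$ the same holds of $A^n$.

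For the forward direction, suppose $A^n>0$ for all $n\ge n_0$. Fix some $N\ge n_0$. By the classical Perron--Frobenius theorem applied to the strictly positive matrix $A^N$, there is a strictly dominant, simple, positive real eigenvalue $\mu_N$ of $A^N$ with a strictly positive eigenvector $v_N$, and similarly $(A^\top)^N=(A^N)^\top$ has a strictly positive eigenvector $w_N$ for the same eigenvalue. To transfer these properties back to $A$, I would use that the eigenvalues of $A^N$ are the $N$-th powers of those of $A$, with generalized eigenspaces preserved. If $A$ had two distinct eigenvalues $\lambda\ne\lambda'$ with $|\lambda|=|\lambda'|$ equal to the spectral radius, then $\lambda^N$ and $(\lambda')^N$ would both be eigenvalues of $A^N$ of maximal modulus; for $\mu_N$ to be strictly dominant we would need $\lambda^N=(\lambda')^N$, but then $\mu_N$ would have algebraic multiplicity at least $2$, contradicting simplicity. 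So $A$ has a unique dominant eigenvalue $\lambda_1$, necessarily simple (else $\lambda_1^N=\mu_N$ would fail to be simple in $A^N$). Since $\mu_N=\lambda_1^N>0$ holds for both $N$ and $N+1$, we conclude that $\lambda_1>0$ is real. Simplicity implies the one-dimensional $\mu_N$-eigenspace of $A^N$ coincides with the $\lambda_1$-eigenspace of $A$, so the positive vector $v_N$ is an eigenvector of $A$ for $\lambda_1$; applying the same reasoning to $A^\top$ (also eventually positive, since $(A^\top)^n=(A^n)^\top$) furnishes the required positive eigenvector of $A^\top$.

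The main obstacle is the spectral bookkeeping in the forward direction, namely ruling out the possibility that several distinct eigenvalues of $A$ of maximal modulus collapse to a single eigenvalue of $A^N$ under the $N$-th power map without violating simplicity or strict dominance in $A^N$. This is resolved by the multiplicity comparison between generalized eigenspaces of $A$ and $A^N$ sketched above, together with the fact that we may use any $N\ge n_0$ (and in particular two consecutive values) to extract the sign of $\lambda_1$.
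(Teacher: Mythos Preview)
Your proof is correct and follows the route that the paper indicates. Note that the paper itself does not give a proof of this statement: it is quoted from Noutsos, with the only hint being the phrase ``This can be used to show'' following the Converse to Perron--Frobenius theorem, signalling that the reverse implication is a direct consequence of that limit. Your reverse direction does exactly this, and your forward direction supplies the details the paper omits, namely applying the classical Perron--Frobenius theorem to a positive power $A^N$ and then transferring the simplicity, dominance, reality/positivity, and eigenvector positivity back to $A$ by comparing generalized eigenspaces.

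One small remark on presentation: your sentence ``Since $\mu_N=\lambda_1^N>0$ holds for both $N$ and $N{+}1$, we conclude that $\lambda_1>0$ is real'' is fine as written (the ratio $\lambda_1^{N+1}/\lambda_1^N=\lambda_1$ is then a positive real), but it may read more transparently if you first observe that uniqueness of the dominant eigenvalue forces $\lambda_1\in\mathbb{R}$ when $A$ is real (since non-real eigenvalues come in conjugate pairs), and then use a single odd $N\ge n_0$ to get $\lambda_1>0$. Either way the argument is sound.
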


Thus it is enough for us to search for a matrix in the semigroup that (along with its transpose) has the strong Perron-Frobenius property. We simply power it to obtain a positive matrix. Conversely, because a positive matrix is trivially eventually positive, the absence of such a matrix implies that the semigroup does not contain a positive matrix.

\subsection{Understanding product eigenspaces}

We now consider a decomposition of $\C^d$ induced by the commuting matrices $A_1, \dots, A_k \in \C^{d \times d}$. Note that commuting matrices have the same eigenvectors. Let $\sigma(A_i)$ denote the set of eigenvalues of $A_i$. We will now construct tuples of eigenvalues of the $A_i$ that we can use to decompose $\C^n$ into eigenspaces.

Let $$\boldsymbol{\lambda} = (\lambda_1, \dots, \lambda_k) \in \sigma(A_1) \times \dots \times \sigma(A_k)$$ be a tuple of eigenvalues. Recall that $\ker(A_i - \lambda_i I)^d$ is the generalized eigenspace of $\lambda_i$ of $A_i$. We say that the tuple $\boldsymbol{\lambda}$ is \emph{compatible} if  

$$V_{\boldsymbol{\lambda}} := \bigcap_{i=1}^k \ker(A_i - \lambda_i I)^d$$ 
is not $\set{\boldsymbol{0}}$. Intuitively, compatible tuples are formed of eigenvalues of the $A_i$ that correspond to overlapping eigenspaces. Call the set of compatible tuples $\Sigma$. It can be shown \cite{joao2016solvability} that $$\C^n = \oplus_{\boldsymbol{\lambda} \in \Sigma} V_{\boldsymbol{\lambda}}.$$

From the Perron-Frobenius theorem \cite{Meyer2023}, we know that it is only possible to have a positive matrix in the semigroup if there is exactly one subspace $V_{\boldsymbol{\lambda}}$ which is 1-dimensional and has a basis vector with all positive entries. Call this subspace the \emph{positive eigenspace}.

Assume we have such an eigenspace for both $\set{A_1, \dots, A_k
}$ and their transposes. This satisfies conditions 3 and 4 of the strong Perron-Frobenius property. We now need conditions 1 and 2: to check whether it is possible to form a product in such a way that the positive eigenspace has a larger eigenvalue than all other eigenspaces.

Consider such a product $A_1^{m_1} \dots A_k^{m_k}$. 
Let the notation $\lambda_{rl}$ denote the $l$th eigenvalue (with multiplicity) of $A_r$, where $r$ ranges over the matrices $A_1$ to $A_k$ and $l$ ranges over the eigenspaces. Let $p$ be the index of the positive eigenspace. We want to check whether there exists a tuple $\boldsymbol{m} = (m_1, \dots, m_k)$ such that the eigenvalue of the positive eigenspace $\prod_{r=1}^k\lambda_{rp}^{m_r}$ is positive real and strictly dominant, ie:

\begin{enumerate}
	\item $\prod_{r=1}^k\lambda_{rp}^{m_r} > 0 ,$\\
	\item $\forall l \neq p, \; \abs{\prod_{r=1}^k\lambda_{rp}^{m_r}} > \abs{\prod_{r=1}^k\lambda_{rl}^{m_r}}$.
\end{enumerate}

Note that these are equivalent to the conditions $\Im(\sum_{r=1}^k m_r \log (\lambda_{rp})) = 0\!\! \mod 2\pi$ and $\forall l \neq p, \; \sum_{r=1}^k m_r \Re(\log (\lambda_{rl}/\lambda_{rp})) < 0 $. We use the principal branch of the complex logarithm.

Let $\boldsymbol{c}$ be the $k$-dimensional vector defined by $c_r = \Im(\log(\lambda_{rp}))$. Let $\boldsymbol{A}$ be the $(d -1 \times k)$-matrix defined by $a_{lr} = \Re(\log(\lambda_{rl}/\lambda_{rp})$. Here $l$ runs over elements of $\set{1,\dots,d}$ apart from $p$). The two conditions are equivalent to a positive integer solution $\boldsymbol{m}$ to the following linear program: $$(\boldsymbol{c}^\top \boldsymbol{m} = \boldsymbol{0}\!\!\!\! \mod 2\pi) \wedge \boldsymbol{A}\boldsymbol{m} < \boldsymbol{0}. $$

Note that by incorporating positivity constraints $-m_i < 0$ in $\boldsymbol{A}$ we may search for an integer solution. We now show that such a program is decidable, subject to Schanuel's conjecture.
\subsection{Integer Programming with Logs of Algebraic Numbers}

\begin{problem}[strict homogenous IP-log]
	Given a matrix $A \in \R^{m \times n}$ and vector $c \in \R^n$ with entries that are logs of algebraic numbers, does there exist a vector $x \in \Z^n$ such that $c^\top x = 0 \mod 2\pi$ and $Ax < 0$ (entry-wise) ?
\end{problem}

Using Masser's theorem on multiplicative relations among algebraic numbers  we can find an integer basis $\set{\vec{v_1},\dots,\vec{v_{l}}}$ (where $l \leq n$) of solutions to $c^\top x = 0 \mod 2\pi$ (in polynomial time). Let $B \in \Z^{n \times l}$ be the matrix with these basis vectors as columns. Then $c^\top x = 0 \mod 2\pi$ iff $x = By$ for some integer vector $y$. Our problem is equivalent to asking whether there exists a vector $y \in \Z^{l}$ such that $AB y < 0$ (entry-wise). Thus without loss of generality we can assume that $c$ is the zero vector. 

Now observe that the set of real solutions $\set{x \in \R^n \st Ax < 0}$, which we will now denote by $C$, is an \emph{open convex cone}. This strong geometric property has the following convenient consequence: 

\begin{proposition}[Cone Real2Int]
	If $C := \set{x \in \R^n \st Ax < 0}$ is non-empty, it must contain an integer point.
\end{proposition}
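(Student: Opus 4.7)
The plan is to combine two properties of the solution set $C$: it is \emph{open} (as the preimage of the open set $\{y \in \R^m : y < 0\}$ under the continuous linear map $x \mapsto Ax$) and it is a \emph{cone}, that is, $x \in C$ and $\lambda > 0$ imply $\lambda x \in C$ (since $A(\lambda x) = \lambda Ax < 0$). The idea is to pick any real solution, use openness to enclose it in a small ball still contained in $C$, and then use the cone property to inflate that ball until it must contain an integer point.

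Concretely, the first step is to pick any $x_0 \in C$; by openness there exists $\epsilon > 0$ with $B(x_0,\epsilon) \subseteq C$, where $B$ denotes the open Euclidean ball. The second step uses the cone property: for every $\lambda > 0$ we have $\lambda \cdot B(x_0,\epsilon) = B(\lambda x_0, \lambda \epsilon) \subseteq C$. The third step is to choose $\lambda > 0$ so large that $\lambda \epsilon \geq \sqrt{n}/2$, and then to round each coordinate of $\lambda x_0$ to its nearest integer, producing $z \in \Z^n$ with $\norm{z - \lambda x_0}_\infty \leq 1/2$ and hence $\norm{z - \lambda x_0}_2 \leq \sqrt{n}/2 \leq \lambda \epsilon$. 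Therefore $z \in B(\lambda x_0, \lambda \epsilon) \subseteq C$, which gives the required integer solution.

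There is no serious obstacle here; the argument is a clean geometric consequence of the fact that $C$ is an open cone, and convexity itself plays no role. The only minor point is that the rounding error in the nearest-integer step grows with the ambient dimension as $\sqrt{n}/2$, but the cone invariance lets us freely absorb this factor by further enlarging $\lambda$.
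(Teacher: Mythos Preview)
Your proof is correct and follows essentially the same approach as the paper: pick a point in $C$, use openness to get a ball around it inside $C$, and then exploit the cone structure to enlarge that ball until it must contain an integer point. The only cosmetic difference is that the paper phrases the enlargement via the Minkowski-sum identity $C+C=C$ (summing copies of the ball), whereas you use direct scalar dilation $\lambda C \subseteq C$; as you observe, the latter shows convexity is not actually needed.
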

\begin{proof}
	That $C$ is an open convex cone follows easily from the definition and linearity of $A$. We will use the fact that $C + C = C$ (where $+$ denotes Minkowski set addition).
	
	Assume $C$ is non-empty. Let $p$ be a point in $C$. By openness, there exists $\varepsilon > 0$ such that the full-dimensional ball $B_\varepsilon(p) \subseteq C$. By adding together $O(\lceil 1/ \varepsilon \rceil)$ copies of $B_\varepsilon(p)$ we see that $C$ contains a ball large enough that it must contain an integer point.
\end{proof}

\begin{theorem}
	The strict homogenous IP-log problem is decidable, assuming Schanuel's conjecture is true.
\end{theorem}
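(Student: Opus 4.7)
The plan is to combine the machinery already set up in this subsection (Masser's theorem and the Cone Real2Int proposition) with the Macintyre--Wilkie decidability theorem for the first-order theory of $(\R,+,\cdot,\exp,0,1)$ subject to Schanuel's conjecture.

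First, I would carry out the reduction already sketched in the paper. Using Masser's theorem, I compute an integer basis $B \in \Z^{n\times l}$ of $\{x\in\Z^n : c^\top x \equiv 0 \bmod 2\pi\}$ and replace the problem with the question of whether $(AB)y < \mathbf{0}$ admits an integer solution $y$. The matrix $AB$ still has entries that are integer linear combinations of logarithms of algebraic numbers, hence (after collecting logs into a single $\log$ of a product of algebraic powers) is of the same shape as $A$. By the Cone Real2Int proposition, the existence of such an integer $y$ is equivalent to the non-emptiness of the open convex cone $C := \{y \in \R^l : (AB)y < \mathbf{0}\}$ over the reals. Thus the problem reduces to deciding non-emptiness of $C$.

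Next, I would express this geometric question as a sentence in the first-order theory of $(\R,+,\cdot,\exp,0,1)$. Each entry of $AB$ has the form $\log \mu$ for some positive real algebraic number $\mu$ (recall that the real parts of logs of complex algebraics $\lambda$ are $\log|\lambda|$, and $|\lambda|$ is itself a positive real algebraic). Every real algebraic number is definable in the language of ordered fields by its minimal polynomial together with an isolating interval having rational endpoints, and both of these data can be computed symbolically from the input. For each matrix entry I introduce a fresh existentially quantified real variable $y_{ij}$ constrained by $\exp(y_{ij}) = \mu_{ij}$; since $\exp$ is strictly monotone on $\R$ and $\mu_{ij}>0$, this uniquely pins $y_{ij}$ to the real value $\log \mu_{ij}$. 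The full sentence then reads
\[
\exists\, x \in \R^l \;\exists\, (y_{ij}) \;\Bigl(\bigwedge_{i,j} \bigl(\exp(y_{ij}) = \mu_{ij}\bigr) \;\wedge\; \bigwedge_i \sum_j y_{ij} x_j < 0 \Bigr),
\]
with each $\mu_{ij}$ replaced by its defining algebraic formula. This sentence holds iff $C$ is non-empty.

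Finally, by the Macintyre--Wilkie theorem the first-order theory of $(\R,+,\cdot,\exp,0,1)$ is decidable assuming Schanuel's conjecture, so feeding the above sentence into their decision procedure settles the original problem. The main obstacle, and essentially the only non-routine part beyond what has already been done, is ensuring that the whole encoding is effective: that we can actually compute the minimal polynomials and isolating intervals of the algebraic numbers $\mu_{ij}$ arising after the Masser reduction, and that we do so in a way that correctly identifies the intended real branch of $\log$. This is standard symbolic-computation bookkeeping (and is compatible with the polynomial height bounds supplied by Masser's theorem), so the construction goes through and decidability follows.
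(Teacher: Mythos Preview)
Your reduction via Masser and the Cone Real2Int proposition is exactly what the paper does, and your final step---encoding non-emptiness of $C$ as an existential sentence in the theory of $\R_{\exp}$ and invoking Macintyre--Wilkie---is correct.  The encoding is sound: the entries of $AB$ are integer combinations of numbers $\Re(\log\lambda)=\log|\lambda|$ with $\lambda$ algebraic, hence each entry is $\log\mu$ for a positive real algebraic $\mu$, and such $\mu$ are first-order definable over $(\R,+,\cdot,<)$ as you say.

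The paper takes a more hands-on route at the last step.  Rather than calling Macintyre--Wilkie as a black box, it runs Fourier--Motzkin elimination directly on the system $Ax<0$.  The only obstruction to doing this naively is that one must know the \emph{sign} of each pivot coefficient at every stage, and those coefficients are rational functions in logarithms of algebraic numbers---i.e., elementary numbers.  The paper therefore invokes Richardson's algorithm (Proposition~\ref{richardson}), which decides zeroness of elementary numbers assuming Schanuel, to determine each sign; non-zero coefficients are then separated from zero by numerical approximation.  This yields a concrete elimination procedure and makes transparent exactly where Schanuel is used (sign tests on finitely many elementary numbers), whereas your approach trades that transparency for brevity by delegating everything to the Macintyre--Wilkie decision procedure.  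Both are valid; indeed the paper itself falls back on Macintyre--Wilkie in Appendix~\ref{sec:nondiag} once polynomial constraints enter and Fourier--Motzkin no longer applies.
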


\begin{proof}
	An instance of the strict homogenous IP-log problem asks to determine the truth of the sentence: 
	$$ \exists x \in \Z^n : (c^\top x = 0\!\!\! \mod 2\pi) \wedge Ax < 0$$
	By Masser's theorem we can eliminate the first conjunct. By the previous proposition we can reduce to solving the linear program $Ax < 0$ over the reals. The only difficulty is that the entries of $A$ are logs of algebraic numbers. We can still apply Fourier-Motzkin elimination.
	Fourier-Motzkin elimination is a simple method for eliminating variables from systems of linear inequalities \cite{dantzig1963linear}. The procedure consists of isolating one variable at a time by dividing all inequalities by its coefficient and checking whether all the lower bounds thus derived for it are less than the upper bounds. Note that this method preserves the set of solutions on the remaining variables, so a solution of the reduced system can always be extended to a solution of the original one. 
	At each step, generating the new set of constraints requires correctly determining the sign of each coefficient generated in the previous step in order to know whether the inequality defines a lower bound or an upper bound. These coefficients are rational functions in logs of algebraic numbers. As such, they are elementary numbers, so using Richardson's algorithm (Prop \ref{richardson}), we can determine which coefficients are zero. For a coefficient which is not zero, we need merely compute it to sufficient precision to separate our approximation from zero.
\end{proof}

The following example illustrates the main points of the argument above:
Let $\omega = e^{2\pi i /3}$ be a primitive cube root of unity and let $\lambda_1, \lambda_2, \lambda_3, \lambda_4$ be real positive algebraic numbers. Consider the following system of inequalities:
\begin{align*}
	\Im(x_1 \log(\omega) + x_2 \log(\omega^2)) &= 0 \mod 2\pi  \\
	\wedge \;\Re(x_1 \log(\lambda_1) + x_2 \log(\lambda_2)) &< 0  \\
	\wedge \; \Re(x_1 \log(\lambda_3) + x_2 \log(\lambda_4)) &< 0
\end{align*}
Using Masser's theorem (or in this instance some algebraic manipulation) we see that the first conjunct is satisfied when $x_1 = 3y_1 + 2y_2$ and $x_2 = 3y_1 -y_2$ for some $y_1, y_2 \in \Z$. Thus we can reduce to the following system:
\begin{align*}
	 & (3y_1 + 2y_2)\log\lambda_1 + (3y_1 -y_2) \log\lambda_2 < 0 \wedge  (3y_1 + 2y_2) \log\lambda_3 + (3y_1 -y_2)  \log\lambda_4 < 0\\
	 &\iff y_1 \log(\lambda_1^3 \lambda_2^3) + y_2 \log(\lambda_1^2/ \lambda_2) < 0 \wedge  y_1 \log(\lambda_3^3 \lambda_4^3) + y_2 \log(\lambda_3^2/ \lambda_4) < 0
\end{align*}

Assuming $\lambda_1\lambda_2 > 1 > \lambda_3\lambda_4$ we can eliminate $y_1$ by dividing out the coefficients (with known signs):
$$y_1  < - y_2 \frac{\log(\lambda_1^2/ \lambda_2)}{\log(\lambda_1^3 \lambda_2^3)} \wedge  y_1  > - y_2 \log\frac{(\lambda_3^2/ \lambda_4)}{\log(\lambda_3^3 \lambda_4^3)}.$$

The system has a solution if $$-y_2\log\frac{(\lambda_3^2/ \lambda_4)}{\log(\lambda_3^3 \lambda_4^3)} < - y_2 \frac{\log(\lambda_1^2/ \lambda_2)}{\log(\lambda_1^3 \lambda_2^3)},$$ ie if $$ \frac{\log(\lambda_1^2/ \lambda_2)}{\log(\lambda_1^3 \lambda_2^3)} - \frac{\log(\lambda_3^2/ \lambda_4)}{\log(\lambda_3^3 \lambda_4^3)} \neq 0,$$ which can be determined by Richardson's algorithm. (In this case existence of a solution was equivalent to the matrix $A$ having non-zero determinant, but this will not hold for general systems with more constraints.)

\subsection{Algorithm}

We thus have the following algorithm for the positive membership problem in the commutative case:

Given a set of commuting rational matrices $\set{A_1, \dots, A_k}$:

\begin{enumerate}
	\item Simultaneously upper-triangularize the matrices (see \cite{CommutingMatrices} for a method) and check for a one-dimensional positive eigenspace. If none or more than one exists, return NO. Do the same for the transposes.
	\item Having identified the positive eigenspace $p$, compute the corresponding $A$ and $c$ constraints for the IP-log problem as described above. Add positivity constraints $x_i > 0$ to $A$. If the IP-log problem is unsatisfiable, return NO. Otherwise, return YES.
	\item In order to handle cases where some matrix has a zero power, repeat the algorithm for subsets of the original $k$ matrices.
\end{enumerate}

We conclude:

\begin{theorem}
	The positive membership problem is decidable for commutative semigroups, assuming Schanuel's conjecture is true.
\end{theorem}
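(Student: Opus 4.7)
The plan is to justify in detail the three-step algorithm already sketched, showing that each step is effective and that together they give a sound and complete procedure. The underlying idea is that a commutative semigroup contains a positive matrix if and only if it contains an eventually positive matrix, and by the Noutsos characterisation this reduces to the existence of a product $A_1^{m_1}\cdots A_k^{m_k}$ such that both it and its transpose satisfy the strong Perron-Frobenius property. Since the $A_i$ commute, they can be simultaneously upper-triangularised over $\C$, and their joint generalised eigenspaces depend only on the compatible tuples $\boldsymbol{\lambda}\in\Sigma$, not on the exponents $m_1,\ldots,m_k$. Hence the question of which eigenspace of a product is one-dimensional and positively spanned is independent of the exponents, while which eigenspace becomes strictly dominant is exactly what the exponents control.

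First I would verify the reduction. By simultaneous triangularisation (effective by the algorithm cited as \cite{CommutingMatrices}), we compute the set $\Sigma$ of compatible tuples together with a basis of each joint generalised eigenspace $V_{\boldsymbol{\lambda}}$. By the Perron-Frobenius theorem, the existence of a positive matrix in the semigroup forces exactly one $V_{\boldsymbol{\lambda}}$ to be one-dimensional and spanned by a positive vector; the same must hold for the transpose family $\{A_1^\top,\ldots,A_k^\top\}$ for some common index $p$. If these structural conditions fail, no product can be eventually positive and we return NO. Otherwise, having fixed the candidate index $p$, conditions 1 and 2 of the strong Perron-Frobenius property for the product $A_1^{m_1}\cdots A_k^{m_k}$ translate exactly into the system
\[
\Im\!\Bigl(\sum_{r=1}^k m_r \log \lambda_{rp}\Bigr) \equiv 0 \pmod{2\pi}
\quad\wedge\quad
\forall l\neq p\,:\,\sum_{r=1}^k m_r \Re\!\bigl(\log(\lambda_{rl}/\lambda_{rp})\bigr) < 0,
\]
as derived in the preceding subsection. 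Appending the strict inequalities $-m_r<0$ to encode $m_r\in\mathbb{N}_{>0}$ yields an instance of the strict homogeneous IP-log problem, which is decidable subject to Schanuel's conjecture by the previous theorem.

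Next I would address soundness of the overall reduction. In one direction, if the IP-log instance has a solution $\boldsymbol{m}\in\Z^k_{>0}$ then the product $A_1^{m_1}\cdots A_k^{m_k}$ satisfies the strong Perron-Frobenius property for both itself and its transpose, hence is eventually positive; a further sufficiently large power lies in the semigroup and is positive. In the other direction, a positive matrix in the semigroup is in particular eventually positive, so by the Noutsos characterisation it and its transpose satisfy the strong Perron-Frobenius property, which forces the structural conditions on $\Sigma$ and produces a tuple $\boldsymbol{m}$ satisfying the IP-log constraints.

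The last step is to handle products in which some generators do not appear, i.e.\ where $m_r=0$ for some $r$. The constraint $-m_r<0$ forbids this, so we iterate the procedure over all non-empty subsets $S\subseteq\{A_1,\ldots,A_k\}$ and return YES if any subset succeeds. This costs $2^k-1$ applications of the previous theorem and so does not affect decidability. The main obstacle in the whole argument is the IP-log step, but this is exactly what the preceding theorem, via Masser's bound and Richardson's algorithm for zero-testing of elementary numbers, has just settled; the remaining work is essentially verifying that the eventual-positivity characterisation of Noutsos cleanly translates into the linear-arithmetic constraints above and that simultaneous triangularisation makes the compatible-tuple analysis effective.
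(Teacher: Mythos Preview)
Your proposal is correct and follows essentially the same approach as the paper: reduce to eventual positivity via Noutsos' characterisation, use simultaneous triangularisation to identify the candidate positive eigenspace, translate the strong Perron--Frobenius conditions into a strict homogeneous IP-log instance, and iterate over subsets to handle vanishing exponents. Your write-up is somewhat more explicit about soundness and completeness than the paper's terse algorithm description, but the structure and all key ingredients are the same.
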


\section{The Non-negative Membership Problem for Commutative Semigroups}

We now combine the ideas in the previous two sections to solve the problem of deciding whether a semigroup of commuting matrices contains a non-negative matrix. For ease of exposition we will assume that the matrices are simultaneously diagonalizable. The general commuting case involves more complicated algebra and is proven in Appendix \ref{sec:nondiag}.

\begin{problem}[Non-negative Matrix in Commutative Simultaneously Diagonalizable Semigroup]
	Given a set of commuting simultaneously diagonalizable $d \times d$ matrices $\set{A_1, \dots, A_k}$ with rational entries, decide whether the semigroup generated by multiplying these matrices together contains a matrix with all its entries greater than or equal to zero.
\end{problem}

First, we refine our notion of positively dominated recurrences.
\begin{definition}[positively dominated by $p$]
	Let $u_n$ be a linear recurrence. For simplicity we assume $u_n$ is non-degenerate and does not have any polynomial terms in its exponential-polynomial form, which is then $\sum_{l=1}^d c_l\lambda_l^n$ for complex numbers $\lambda_1,\dots, \lambda_d$ and coefficients $c_l$. 
	We say $u_n$ is \emph{positively dominated by term $p$} (here p (for positive) refers to the index) if 
	\begin{enumerate}
		\item $c_p > 0$\\
		\item $\lambda_p > 0$\\
		\item $\forall l \neq p, \; \abs{\lambda_p} > \abs{\lambda_l} $. 
	\end{enumerate}
	Call this predicate $PD_p(u_n)$.
\end{definition}

Given a single matrix $M$, we define the $d^2$ recurrences $u^{ij}_{n} := e_i^\top M^n e_j = \sum_{l=1}^r c_l\lambda_l^n$ where $\lambda_l$ are the eigenvalues of $M$.

We have shown in previous sections that $M$ being eventually non-negative is equivalent to the decidable condition $$\forall i \forall j (u^{ij}_n \text{ is ultimately zero} \vee \exists p \; PD_p(u^{ij}_n) ).$$

We now show a similar construction for multiple matrices.
Let $A_1,\dots,A_k \in \Q^{d \times d}$ be a set of commuting simultaneously diagonalizable matrices. The idea is to search for an eventually non-negative matrix $A_1^{m_1} \dots A_k^{m_k}$.

Define the $(m_1,\dots,m_k)$-integer parameterized matrix entry recurrence $u_n^{ij}(m_1,\dots,m_k) := e_i^\top [A_1^{m_1} \dots A_k^{m_k}]^n e_j$.

Let $m$ denote the tuple $(m_1,\dots,m_k)$. The existence of an eventually non-negative matrix (and thus, a non-negative matrix) in the semigroup is equivalent to the decidable condition $$ENN := \exists m \; \forall i \forall j (u^{ij}_n(m) \text{ is ultimately zero} \vee \exists p \; PD_p(u^{ij}_n(m)) ).$$

Let $S$ be a matrix that simultaneously diagonalizes the matrices $A_1,\dots,A_k$ such that $A_r = S^{-1} D_r S = S^{-1} \diag(\lambda_{r1},\dots,\lambda_{rd}) S$. The notation $\lambda_{rl}$ denotes the $l$th eigenvalue of $A_r$ (with multiplicity).

Then
\begin{align*}
	u_n^{ij}(m_1,\dots,m_k) &:= e_i^\top [A_1^{m_1} \dots A_k^{m_k}]^n e_j \\
	&= e_i^\top S^{-1} [D_1^{m_1} \dots D_k^{m_k}]^n S e_j \\
	&= e_i^\top S^{-1} [\diag(\lambda_{11},\dots,\lambda_{1d})^{m_1} \dots \diag(\lambda_{k1},\dots,\lambda_{kd})^{m_k}]^n S e_j \\
	&= e_i^\top S^{-1} \left[\diag \left( \prod_{r=1}^k\lambda_{r1}^{m_r}, \dots, \prod_{r=1}^k\lambda_{rd}^{m_r} \right )\right]^n S e_j \\
	&= \sum_{l=1}^d  (S^{-1})_{il}\, (S)_{lj} \left[\prod_{r=1}^k\lambda_{rl}^{m_r}\right]^n .
\end{align*}

Now $u_n^{ij}(m_1,\dots,m_k)$ is an exponential-polynomial recurrence in the complex numbers $\prod_{r=1}^k\lambda_{r1}^{m_r},\dots,\prod_{r=1}^k\lambda_{rd}^{m_r}$ with coefficients $(S^{-1})_{il} (S)_{lj}$.

We see that $u_n^{ij}(m_1,\dots,m_k)$ is positively dominated by $p$ if
\begin{enumerate}
	\item $(S^{-1})_{ip}\, (S)_{pj} > 0,$\\
	\item $\prod_{r=1}^k\lambda_{rp}^{m_r} > 0,$\\
	\item $\forall l \neq p \text{ such that } (S^{-1})_{ip}\, (S)_{pj} \neq 0,  \; \abs{\prod_{r=1}^k\lambda_{rp}^{m_r}} > \abs{\prod_{r=1}^k\lambda_{rl}^{m_r}}$.
\end{enumerate}

As in the previous section, let $\boldsymbol{c}(p)$ be the $k$-dimensional vector defined by $c_r = \Im(\log(\lambda_{rp}))$. Let $\boldsymbol{A}(p)$ be the (at most) $d -1 \times k$-matrix defined by $a_{lr} = \Re(\log(\lambda_{rl}/\lambda_{rp})$. Here $l$ runs over elements of $\set{1,\dots,d}$ apart from $p$ such that $(S^{-1})_{il}\, (S)_{lj} \neq 0$.
Let $\boldsymbol{m}$ be the vector $(m_1,\dots,m_k)$.

Now condition 2 is equivalent to $\boldsymbol{c}(p)^\top \boldsymbol{m} = 0\!\! \mod 2\pi$ and condition 3 is equivalent to $\boldsymbol{A}(p)\boldsymbol{m} < 0$.

For simplicity we introduce the following new notation:
\begin{align*}
    Z(i,j)  &:= u^{ij}_n \text{ is identically zero}\\
    S(i,j,p) &:= (S^{-1})_{ip}\, (S)_{pj} > 0\\
    C(p,m) &:= c(p)^\top m = 0\!\!\!\! \mod 2\pi \\
    A(i,j,p,m) &:= A(p)m < 0\\
	NND(i,j,p,m) &:=  Z(i,j) \vee (S(i,j,p) \wedge C(p,m) \wedge A(i,j,p,m) ).
\end{align*}

Note that $A(i,j,p,m)$ depends on $i$ and $j$ because we only care about the eigenvalue blocks where the coefficient is non-zero.

Writing out the predicate ENN in full with new notation, we have
\begin{align*}
    ENN &:= \exists m \; \forall i \forall j (u^{ij}_n(m) \text{ is identically zero} \vee \exists p \; PD_p(u^{ij}_n(m)) )\\
    &= \exists m \; \forall i \forall j ( Z(i,j) \vee \exists p \; (S(i,j,p) \wedge C(p,m) \wedge A(i,j,p,m) ) )\\
	&= \exists m \; \forall i \forall j  \; \exists p \; NND(i,j,p,m) 
\end{align*}
Now observe that the quantifications over $i$, $j$ and $p$ are finite and range over $1,\dots,d$. Thus we can replace the quantifications with finite disjunctions and conjunctions.

Our goal is to move all disjunctions outside the existential quantifier on $m$ so that we can use the IP-log algorithm from the previous section on conjunctions of terms of the form $Am < 0$.

Let $f$ range over functions assigning a particular choice of $p$ to each sequence $u_n^{ij}(m)$. There are $d^{(d \times d)}$ such functions.
\begin{align*}
    ENN &= \exists m \; \forall i \forall j ( \exists p \; NND(i,j,p,m) )\\
    &= \exists m \; \wedge_{ij} ( \vee_p \; NND(i,j,p,m) )\\
	&= \exists m \; \vee_f ( \wedge_{ij} NND(i,j,p = f(i,j),m) )\\
	&= \vee_f ( \; \exists m \; \wedge_{ij}  NND(i,j,p = f(i,j),m) )
.\end{align*}

Essentially this means we `non-deterministically' choose a $p$ for each sequence $u_n^{ij}(m)$ and then check if there is an $m$ that works for all of them --- that makes all the selected $p$'s into real positively dominating terms.

Analysing the elements of $NND(i,j,f(i,j),m)$, we see that $Z(i,j)$ and $S(i,j,f(i,j))$ constraints are trivially checkable. Constraints $\wedge_{ij} C(f(i,j),m)$ can be removed iteratively using Masser's theorem. The remaining conjunctions are terms of the form $\wedge_{ij} A(i,j,f(i,j),m)$, but since these are linear programs $\boldsymbol{A}(i,j,p)\boldsymbol{m} < \boldsymbol{0}$ we can simply concatenate the various matrices $\boldsymbol{A}(i,j,p)$ together. We can then use the IP-log algorithm from the previous section to check if there exists an integer solution to the conjunction of these terms.

Of course in practice we would only need to check $d$ different matrices of size $d-1 \times k$ since the eigenvalues are the same for all sequences.

Thus we have reduced the diagonalizable case to a disjunction of predicates, each of which can be reduced to strict homogenous IP-log. More generally, we have:

 \begin{theorem}
 The non-negative membership problem is decidable for commutative semigroups, assuming Schanuel's conjecture is true.
 \end{theorem}
 \begin{proof}
     See Appendix \ref{sec:nondiag}
 \end{proof}

\section{Undecidability in the Non-commutative Case}
To complement the decidability results for commuting matrices in the preceding sections, in this section we show undecidability of the full version of the membership problem, in which commutativity is not assumed:
\begin{problem}[Non-negative Membership]
	Given a set of $d \times d$ matrices with rational entries, decide whether the generated semigroup contains a non-negative matrix.
\end{problem}
The proof of undecidability is by reduction from the threshold problem for probabilistic automata, which is well-known to be undecidable \cite{DBLP:journals/siglog/Fijalkow17}.

\begin{problem}[Threshold Problem for Probabilistic Automata]
	Given vectors $u$ and $v$ in $\Q^d$ and a matrix semigroup $\mathcal{S}$ generated by stochastic matrices $\set{A_1, \dots, A_k} \in \Q^{d \times d}$, decide whether there exists a matrix $A \in \mathcal{S}$ such that $u^\top A v \geq 1/2$.
\end{problem}

\begin{theorem}
	\label{nms-undecidable}
	The Non-negative Membership Problem is undecidable.
\end{theorem}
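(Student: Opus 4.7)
The plan is to reduce the threshold problem for probabilistic automata (undecidable by~\cite{DBLP:journals/siglog/Fijalkow17}) to the Non-negative Membership Problem. Given a PFA instance $(A_1,\dots,A_k, u, v)$ with the $A_i \in \Q^{d\times d}$ stochastic, the goal is to construct a finite set $\mathcal{M}$ of rational matrices such that the semigroup generated by $\mathcal{M}$ contains a non-negative matrix if and only if some product $A_{i_1}\cdots A_{i_m}$ satisfies $u^\top A_{i_1}\cdots A_{i_m} v \geq \tfrac{1}{2}$.

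The core gadget is a pair of rank-one \emph{probe} matrices $L$ and $R$ together with matrices $B_1,\dots,B_k$ that embed the $A_i$ into a larger dimension. I would design the probes so that $L \cdot B_{i_1}\cdots B_{i_m} \cdot R$ collapses to a rank-one matrix whose only nonzero entry equals $u^\top A v - \tfrac{1}{2}$. Concretely, one can take $L = e_1 a^\top$ and $R = b\, e_n^\top$ with $a$ and $b$ chosen to encode $u$, $v$, and the constant $\tfrac{1}{2}$ (a natural choice in dimension $d+2$ is $a = (0,u^\top,-\tfrac{1}{2})$ and $b = (0,v^\top,1)^\top$), and take each $B_i$ to carry $A_i$ in its central diagonal block. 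A direct computation then yields $L \cdot B_{i_1}\cdots B_{i_m} \cdot R = (u^\top A v - \tfrac{1}{2})\, e_1 e_n^\top$, which is non-negative exactly when the threshold is met; this handles the ``yes $\Rightarrow$ yes'' direction of the reduction.

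The main obstacle is ensuring that every \emph{other} product in $\langle \mathcal{M}\rangle$ fails to be non-negative. The naive setup fails on two counts: since each $A_i$ (and the natural block embedding) is non-negative, every pure $B$-product is non-negative; and the rank-one probes can satisfy $L^2=0$, putting the non-negative zero matrix into the semigroup. My plan is to repair both defects by tensoring the $B_i$ and, with slight modification, the probes against an auxiliary $2\times 2$ rational block whose positive powers are never non-negative --- a concrete example being the rotation $\rho = \bigl(\begin{smallmatrix}3/5 & 4/5\\-4/5 & 3/5\end{smallmatrix}\bigr)$, whose rotation angle $\arctan(4/3)$ is not a rational multiple of $\pi$ by Niven's theorem, so that $\rho^m$ has a negative entry for every $m\geq 1$. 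With this auxiliary factor in place, every nontrivial product in the new generators inherits a negative entry from the rotation block, except when the probes occur in exactly the canonical $L \cdot B^* \cdot R$ pattern for which the probes' rotation components can be arranged to cancel cleanly.

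The hardest step of the plan will then be a finite case analysis over the possible positions of $L$ and $R$ inside an arbitrary word in $\mathcal{M}^*$, checking that any non-canonical pattern (extra copies of $L$ or $R$, misplaced probes, or probes in the wrong order) produces a negative entry, either directly from $L$'s explicit $-\tfrac{1}{2}$ coefficient or from an uncancelled rotation factor. Once this analysis goes through, non-negativity of some matrix in $\langle\mathcal{M}\rangle$ is equivalent to meeting the PFA threshold, and Theorem~\ref{nms-undecidable} follows from undecidability of the latter.
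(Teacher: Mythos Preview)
Your reduction source and overall strategy match the paper's exactly: both reduce from the threshold problem for probabilistic automata and aim to build a finite set of generators whose semigroup contains a non-negative element precisely when some word meets the threshold. The gadget, however, is quite different, and yours has a gap at the step you flag as hardest.

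The concern is the claim that the rotation components ``can be arranged to cancel cleanly'' for the canonical pattern $L\,B_{i_1}\cdots B_{i_m}\,R$ while persisting in every other product. If each generator is tensored with $\rho$ (or any fixed power of it), then \emph{every} product of length $\ell$ carries the second tensor factor $\rho^{\ell}$, which has a negative entry for all $\ell\geq 1$ --- including the canonical products you want to survive. To suppress the rotation in the canonical case you would have to tensor $L$ and $R$ with rank-one projections rather than with $\rho$; but then those probe generators no longer inherit a negative entry from $\rho$, and in particular $R$ itself is already non-negative (since $v\geq 0$ for a standard PFA), so the single generator $R$ witnesses non-negative membership regardless of the threshold. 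You did not list this among the obstacles, and it is not clear that any choice of $2\times 2$ auxiliary factors makes the case analysis close simultaneously for pure $B$-words, for the bare probes, and for the canonical pattern at \emph{every} length $m$.

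The paper sidesteps all of this with a much more direct device. Working in dimension $d+2$, every generator is given first column $e_1$ and a fixed ``poison'' entry $-\tfrac{1}{2}$ in position $(1,2)$: the matrices $A'_i$ carry $A_i$ in the bottom-right $d\times d$ block, while $u^\top$ is placed in the first row of a generator $U$ and $v$ in the second column of a generator $V$. A two-line calculation shows that the $(1,2)$-entry of \emph{any} product is $-\tfrac{1}{2}$ unless the word ends with a suffix $U A'_{i_1}\cdots A'_{i_s} V$, in which case that entry becomes $u^\top A_{i_1}\cdots A_{i_s} v - \tfrac{1}{2}$ and every other entry of the product is $0$ or $1$. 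No tensoring, no rotation, and the ``case analysis'' collapses to inspecting the last letter of the word.
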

\begin{proof}
Given non-negative integers $m,n$, write $0^{m\times n}$ for the zero matrix of dimension $m\times n$.  Suppose that we are
given an instance of the threshold problem for probabilistic automata, defined by vectors $u,v\in \Q^d$ and a matrix semigroup $\mathcal{S}$ generated by stochastic matrices $A_1, \dots, A_k \in \Q^{d \times d}$.  Now
consider the semigroup $\mathcal{S}'$ generated by 
the following matrices of dimension $(d+2)\times (d+2)$:
\begin{align*}
	U &:= \begin{bmatrix}
		1 & -1/2 & \boldsymbol{u}^\top \\
		0 & 0 & 0^{1\times d}  \\
		0^{d\times 1} & 0^{d\times 1} & 0^{d \times d}\\
	\end{bmatrix} \\
	A'_i &:= \begin{bmatrix}
		1 & -1/2 & 0^{1\times d}\\
		0 & 0 & 0^{1\times d}\\
		0^{d\times 1} & 0^{d\times 1} & {A}_i\\
	\end{bmatrix} \quad (i=1,\ldots,k)\\
	V &:= \begin{bmatrix}
		1 & -1/2 & 0^{1\times d} \\
		0 & 0 & 0^{1\times d}  \\
		0^{d\times 1} & \boldsymbol{v} & 0^{d \times d}\\
	\end{bmatrix}
\end{align*}
Note that matrix $A'_i$ incorporates $A_i$ for $i=1,\ldots,k$, while the matrices $U$ and $V$ respectively incorporate the initial and final vectors $u$ and $v$ of the probabilistic automaton.

We claim that the semigroup $\mathcal{S}'$
contains a non-negative matrix if and only if there exists a matrix $A \in \mathcal{S}$ such that $u^\top A v \geq \frac{1}{2}$.  To this end,
consider a string of matrices chosen from the set 
$\{ U, V\} \cup \{A'_1,\ldots, A'_k\}$.
Any product $B$ of such a string that does not end with a suffix 
$U A'_{i_1} \cdots A'_{i_s} V$ for $i_1,\ldots,i_s \in \{1,\ldots,k\}$, has $B_{1,2}=-\frac{1}{2}$ and hence cannot be a non-negative matrix.  
It remains to consider products $B$ of strings that do have such a suffix.
In this case we have 
\[ B_{1,2} = (UA'_{i_1} \cdots A'_{i_s}V)_{1,2} = u^\top A_{i_1} \cdots A_{i_s} v  - \frac{1}{2}\]
and hence $B$ is only non-negative if 
$u^\top A_{i_1} \cdots A_{i_s} v \geq \frac{1}{2}$.   Since it further holds that 
$B_{1,1}=1$ and $B_{i,j}=0$ for all other entries $(i,j)$, we conclude that exists a non-negative matrix in the semigroup $\mathcal{S}'$ if and only if there exists a matrix $A \in \mathcal{S}$ such that $u^\top A v \geq \frac{1}{2}$.
\end{proof}
\section{Further work}
We leave open the question of quantitative refinements of our decidability results. These include giving complexity upper bounds for the Non-negative and Positive Membership Problems as well as the related question of giving upper bounds on the length of the shortest string of generators that yields a non-negative or positive matrix in a given semigroup.  Both questions would seem to be difficult owing to the use of Schanuel's Conjecture in our proofs.   Characterising the complexity of determining 
whether a matrix is eventually non-negative would seem to be more straightforward.  We claim
that the decision procedure can be implemented in non-deterministic polynomial time.
Note that the analogous Eventual Positivity problem is in $\PTIME$  \cite{noutsos2006perron}.

\bibliography{references}

\newpage
\appendix

\section{Non-diagonalizable Case}
\label{sec:nondiag}

\begin{problem}[Non-negative Membership for Commutative Semigroup]
	Given a set of commuting $d \times d$ matrices $\set{A_1, \dots, A_k}$ with rational entries, decide whether the semigroup generated by multiplying these matrices together contains a matrix with all its entries greater than or equal to zero.
\end{problem}

It is known that unfortunately simultaneous Jordanization of commuting matrices is not always possible \cite{Cai94}. However, a slightly weaker block diagonal form \cite[Thm 12]{joao2016solvability} is possible. Here we put the matrices into block diagonal form, where each block is of the form $\lambda_iI + N$ where $N$ is strictly upper triangular and thus nilpotent.

Let $A_1,\dots,A_k \in \Q^{d \times d}$ be a set of commuting matrices. 

Define the $(m_1,\dots,m_k)$-integer parameterized matrix entry recurrence $u_n^{ij}(m_1,\dots,m_k) := e_i^\top [A_1^{m_1} \dots A_k^{m_k}]^n e_j$.

Let $S$ be a matrix that simultaneously block-diagonalizes the matrices such that $A_r = S^{-1} BD_r S = S^{-1} \diag(B_{r1},\dots,B_{rb}) S$. Here $B_{rl} = \lambda_{rl}I + N_{rl}$ denotes the $l$th block of $A_r$, where $N_{rl}$ is strictly upper triangular. Note that for fixed $l$ the various $N_{rl}$ inherit commutativity from the original matrices.
Then we have that 
\begin{align*}
	u_n^{ij}(m_1,\dots,m_k) &:= e_i^\top [A_1^{m_1} \dots A_k^{m_k}]^n e_j \\
	&= e_i^\top S^{-1} [BD_1^{m_1} \dots BD_k^{m_k}]^n S e_j \\
	&= e_i^\top S^{-1} [\diag(B_{11},\dots,B_{1b})^{m_1} \dots \diag(B_{k1},\dots,B_{kb})^{m_k}]^n S e_j \\
	&= e_i^\top S^{-1} \left[\diag \left( \prod_{r=1}^k B_{r1}^{m_r}, \dots, \prod_{r=1}^k B_{rb}^{m_r} \right )\right]^n S e_j \\
	&= [s^{-1}_{i1},\dots,s^{-1}_{id}] \diag \left( \left[\prod_{r=1}^k B_{r1}^{m_r}\right]^n, \dots, \left[\prod_{r=1}^k B_{rb}^{m_r}\right]^n \right )^n [s_{1j},\dots,s_{dj}]^\top .
\end{align*}

Let us examine the structure of the submatrix $\left[\prod_{r=1}^k B_{rb}^{m_r}\right]^n$. 
Recall that $B_{rl} = \lambda_{rl}I + N_{rl}$. Note that any power or product of powers of the nilpotents can be non-zero only upto total degree at most $d$. We adopt the convention that a zero power indicates the identity matrix of appropriate size.
For ease of notation we drop the block subscript and expand out
\begin{align*}
	\left[\prod_{r=1}^k B_{rb}^{m_r}\right]^n &= \left[\prod_{r=1}^k (\lambda_rI + N_r)^{m_r}\right]^n \\
	&= \prod_{r=1}^k \lambda_r^{nm_r}(I + N_r/\lambda_r)^{nm_r} \\
 &= \prod_{r=1}^k \lambda_r^{nm_r} \cdot \prod_{r=1}^k \left( \sum_{i_r = 0}^d \binom{nm_r}{i_r} (N_r/\lambda_r)^{i_r} \right) \\
 &= \prod_{r=1}^k \lambda_r^{nm_r} \cdot \left[ \sum_{(i_1,\dots,i_k) = (0,\dots,0)}^{i_1+\dots+i_k = d} \left( \prod_{r =1}^k\binom{nm_r}{i_r} (N_r/\lambda_r)^{i_r}  \right)\right] \\
 &= \MPoly(n\mathbf{m}) \prod_{r=1}^k \lambda_r^{nm_r}  .
\end{align*}
Here $\MPoly(n\mathbf{m})$ is shorthand for the block-matrix with entries which are polynomial of degree at most $d$ in the variables $(m_{1},\dots,m_{k})$ (scaled by $n$) with coefficients arising from the nilpotent submatrices. 

Substituting this back into our original expression for $u_n^{ij}(m_1,\dots,m_k)$ we get

\begin{align*}
	&u_n^{ij}(m_1,\dots,m_k) := e_i^\top [A_1^{m_1} \dots A_k^{m_k}]^n e_j \\
	&= [s^{-1}_{i1},\dots,s^{-1}_{id}] \diag \left( \left[\prod_{r=1}^k B_{r1}^{m_r}\right]^n, \dots, \left[\prod_{r=1}^k B_{rb}^{m_r}\right]^n \right )^n [s_{1j},\dots,s_{dj}]^\top \\
	&= [s^{-1}_{i1},\dots,s^{-1}_{id}] \diag \left( \MPoly_1(n\mathbf{m}) \prod_{r=1}^k \lambda_{r1}^{nm_r}, \dots, \MPoly_b(n\mathbf{m_r}) \prod_{r=1}^k \lambda_{rd}^{nm_r} \right )^n [s_{1j},\dots,s_{dj}]^\top \\
	&= \sum_{l=1}^d  \left( \poly^{ij}_l(n\mathbf{m}) \prod_{r=1}^k\lambda_{rl}^{nm_r} \right) \text{(after folding in constants)}.
\end{align*}


Notice that the asymptotic top term in $n$ in the polynomial $\poly^{ij}_l(n\mathbf{m})$ is the homogenous subpolynomial of highest degree in $(m_{1},\dots,m_{k})$ - call it $h^{ij}_l(\mathbf{m})$.
Once we pick a particular $p$ to be our positively dominating term, we only need the following three conditions for the recurrence to be positively dominated by $p$:

\begin{enumerate}
	\item $h^{ij}_p(\mathbf{m})> 0,$\\
	\item $\prod_{r=1}^k\lambda_{rp}^{m_r} > 0 ,$\\
	\item $\forall l \neq p \text{ such that } \poly^{ij}_l(n\mathbf{m}) \text{ is not the zero polynomial,}  \; \abs{\prod_{r=1}^k\lambda_{rp}^{m_r}} > \abs{\prod_{r=1}^k\lambda_{rl}^{m_r}} $.
\end{enumerate}

Via the same algebraic manipulations as in the diagonalizable case, checking $$ENN := \exists m \; \forall i \forall j (u^{ij}_n(m) \text{ is identically zero} \vee \exists p \; PD_p(u^{ij}_n(m)) )$$ reduces to solving conjunctions of the form $$\wedge_{ij} (h^{ij}_l(\mathbf{m})> 0 \wedge \mathbf{c}(p)^\top \mathbf{m} = 0\!\!\!\! \mod 2\pi \wedge \mathbf{A}(p)\mathbf{m} < 0)$$ over the integers.

We can eliminate the second conjunct using Masser's theorem. Now suppose there exists a real solution on the unit sphere to $h^{ij}_l(\mathbf{m})> 0 \wedge \mathbf{A}(p)\mathbf{m} < 0$. By openness, there exists a rational solution nearby. 
Since both these conjuncts are homogenous, $\mathbf{m}$ is a solution iff $n\mathbf{m}$ is a solution, for all real $n > 0$. Thus we may clear denominators from the rational solution to obtain an integer solution $\mathbf{m}$ to $ENN$. 

The sentence $$\exists \mathbf{m} \in \R^k : h^{ij}_l(\mathbf{m})> 0 \wedge \mathbf{A}(p)\mathbf{m} < 0 $$ can be written in the first order theory of the reals with exponentiation, which is decidable assuming Schanuel's conjecture as shown by Wilkie and Macintyre \cite{WM}.

We now need to prove that failing to find such a real solution implies that the semigroup does not contain a non-negative matrix.
Suppose that there exists some $m = (m_1,\dots,m_k)$ such that $A_1^{m_1}\dots A_k^{m_k}$ is non-negative. Then $(A_1^{m_1} \dots A_k^{m_k})^n$ is non-negative for all $n$. 
By Proposition \ref{prop:DOM3}, each individual recurrence is ultimately zero or must have a strictly dominant top term. Thus $m$ satisfies $Am < 0$ and $c^\top m = 0\!\! \mod 2\pi$ for the appropriate $A$ and $c$.
The top term (as a function of $n$) has a polynomial coefficient which is the homogenous polynomial we identified above. Thus the homogenous polynomial is non-negative for $m$, completing the requirements necessary for the sentence $\exists \mathbf{m} \in \R^k : h^{ij}_l(\mathbf{m})> 0 \wedge \mathbf{A}(p)\mathbf{m} < 0 $ to be true. 

We conclude:

\begin{theorem}
 The non-negative membership problem is decidable for commutative semigroups, assuming Schanuel's conjecture is true.
 \end{theorem}

 \section{Positive Rational Sequences are Dominated}
 \label{sec:positive-dom}

 We need the following classical results from Perron-Frobenius theory (see, e.g.,~\cite[Chap. 8]{Meyer2023}).
\begin{theorem}
	\label{pf}
	\begin{enumerate}
		\item If $A \geq 0$ is irreducible then it has \emph{cyclic peripheral spectrum}, i.e., its eigenvalues of maximum modulus have the form $\{ \rho ,\rho \omega, \ldots ,\rho\omega^{k-1}\}$, where $\rho>0$, $k$ is a positive integer, and $\omega$ is a primitive $k$-th root of unity. If $A$ has only one eigenvalue on the spectral circle it is called a primitive matrix.
		\item For a non-negative irreducible primitive matrix $A$ the pointwise limit $\lim_{n \rightarrow \infty} (A/\rho(A))^n$
		exists and is
  a strictly positive matrix.
	\end{enumerate}
\end{theorem}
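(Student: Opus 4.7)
The plan is to treat the two parts separately, using standard Perron--Frobenius machinery. For Part (1), the starting point will be the classical Perron theorem for irreducible non-negative matrices, which supplies a simple positive eigenvalue $\rho := \rho(A) > 0$ equipped with a strictly positive right eigenvector $v$ and a strictly positive left eigenvector $u$. I will then argue that the peripheral spectrum of $A$ is invariant under multiplication by any of its elements divided by $\rho$, which forces it to be a finite cyclic subgroup of $\rho \cdot \mathbb{T}$.

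In more detail: let $\lambda = \rho e^{i\theta}$ be any eigenvalue of $A$ with $|\lambda| = \rho$, and let $x$ be a corresponding right eigenvector. The key trick is to compare $|x|$ (entrywise) with $x$ itself. From $Ax = \lambda x$ and entrywise non-negativity of $A$, the triangle inequality gives $A|x| \geq |\lambda|\, |x| = \rho |x|$. Pairing with the positive left eigenvector $u$ and using $u^\top A = \rho u^\top$ forces $A|x| = \rho |x|$, so $|x|$ is a Perron eigenvector and hence a positive multiple of $v$. Writing $x_j = |x_j| e^{i\varphi_j}$ and setting $D := \diag(e^{i\varphi_1}, \ldots, e^{i\varphi_d})$, the equality case in the triangle inequality forces the pointwise identity $A = e^{i\theta} D^{-1} A D$. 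Thus $A$ is similar to $e^{i\theta} A$, so the spectrum of $A$ is invariant under rotation by $e^{i\theta}$. Applying this to every peripheral eigenvalue, the set of rotations $e^{i\theta}$ arising this way forms a finite subgroup of the unit circle, necessarily cyclic of some order $k$, and the peripheral spectrum is exactly $\{\rho, \rho\omega, \ldots, \rho\omega^{k-1}\}$ where $\omega = e^{2\pi i/k}$. The definition of \emph{primitive} as the case $k = 1$ is then just a naming convention.

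For Part (2), assume $A$ is irreducible and primitive, so $\rho$ is the unique eigenvalue on the spectral circle, and by Part (1) together with simplicity of the Perron root it is a simple eigenvalue. Normalize the Perron eigenvectors so that $u^\top v = 1$, and form the rank-one spectral projector $P := v u^\top$ associated with $\rho$. The spectral decomposition then reads $A = \rho P + B$ where $PB = BP = 0$ and the spectral radius of $B$ is strictly less than $\rho$ (since all other eigenvalues have modulus $< \rho$ in the primitive case, and any Jordan block lies entirely inside the kernel-complement of $P$). Iterating gives $(A/\rho)^n = P + (B/\rho)^n$, and Gelfand's formula yields $\|(B/\rho)^n\| \to 0$. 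Thus $(A/\rho)^n \to P = v u^\top$, and since both $v$ and $u$ are strictly positive, so is the limit.

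The main obstacle in a clean write-up is the first part: one must carefully justify passing from the inequality $A|x| \geq \rho |x|$ to equality, and from there to the diagonal-similarity identity $A = e^{i\theta} D^{-1} A D$. Both steps rely crucially on irreducibility (to rule out a zero coordinate of $|x|$) and on strict positivity of the left Perron eigenvector (to convert a componentwise inequality into an equality after pairing). Once that rigidity is established, the cyclic-group structure of the peripheral spectrum and the primitive-case convergence are essentially automatic, so I would spend the bulk of the write-up on that step and leave Part (2) as a short consequence of the spectral decomposition.
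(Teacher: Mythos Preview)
The paper does not actually prove this theorem: it is stated as a classical result from Perron--Frobenius theory with a reference to Meyer's textbook, and no argument is given. Your proposal supplies a correct standard proof---Wielandt's argument via the equality case of the triangle inequality for Part~(1), and the spectral-projector decomposition for Part~(2)---so there is nothing to compare against and no gap to flag. (One cosmetic point: in your diagonal-similarity identity the sign of the rotation comes out as $D^{-1}AD = e^{i\theta}A$ rather than $A = e^{i\theta}D^{-1}AD$, but either direction yields the needed rotation-invariance of the spectrum.)
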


We now prove Berstel's result for matrix entry recurrences.

\begin{proposition}
	\label{pdr}
 Let $M\in \Q^{d\times d}$ be a non-negative non-degenerate matrix.
 Then for all $i,j \in \{1,\ldots,d\}$ the LRS $u_n = (M^n)_{i,j}$ is dominated.
\end{proposition}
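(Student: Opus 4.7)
The plan is to exploit the Frobenius normal form of $M$ together with the asymptotic form of powers of primitive matrices. First I would put $M$ into block upper triangular form with irreducible diagonal blocks $B_1,\ldots,B_s$ (the Frobenius normal form, arising from a topological sort of the strongly connected components of the weighted digraph of $M$). By Theorem~\ref{pf}(1), each irreducible block $B_k$ has cyclic peripheral spectrum $\rho(B_k)\cdot\{1,\omega_k,\ldots,\omega_k^{p_k-1}\}$ for some primitive $p_k$-th root of unity $\omega_k$. Non-degeneracy of $M$ forbids ratios of distinct eigenvalues from being roots of unity, which forces $p_k=1$ for every $k$; hence each $B_k$ is primitive and $\rho(B_k)$ is its unique eigenvalue on the spectral circle.

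For indices $i,j$ lying in the same diagonal block $B_k$, we have $(M^n)_{i,j}=(B_k^n)_{i,j}$, and Theorem~\ref{pf}(2) gives $(B_k/\rho(B_k))^n\to P_k$ with $P_k>0$ entrywise; thus $(M^n)_{i,j}=\rho(B_k)^n(c_{i,j}+o(1))$ with $c_{i,j}>0$, and the sequence is dominated by the single root $\rho(B_k)$. For cross-block indices $i$ in block $a$ and $j$ in block $b$ with $a\neq b$, either block $a$ cannot reach block $b$ in the block DAG, in which case $(M^n)_{i,j}\equiv 0$, or $(M^n)_{i,j}$ decomposes as a finite sum, over directed block-paths $\pi:a=c_0<c_1<\cdots<c_R=b$, of convolutions
\[
\sum_{n_0+\cdots+n_R=n-R}\bigl(B_{c_0}^{n_0}E_{c_0,c_1}B_{c_1}^{n_1}\cdots E_{c_{R-1},c_R}B_{c_R}^{n_R}\bigr)_{i,j},
\]
where each $E_{c_l,c_{l+1}}$ is the non-negative connector block between strongly connected components. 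Substituting $B_{c_l}^{n_l}=\rho(B_{c_l})^{n_l}(P_{c_l}+o(1))$ block by block, each convolution is asymptotic to $Q_\pi(n)\,(\rho^*_\pi)^n$, where $\rho^*_\pi=\max_{l}\rho(B_{c_l})$ and $Q_\pi$ is a matrix-valued polynomial of degree $t_\pi-1$ (with $t_\pi$ the number of blocks on $\pi$ achieving $\rho^*_\pi$), having entries that are non-negative combinations of products of the strictly positive limits $P_{c_l}$ and the connectors $E_{c_l,c_{l+1}}$, and hence strictly positive at $(i,j)$.

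Let $\rho^*(i,j)=\max_\pi\rho^*_\pi$. Because every path contributes entrywise non-negatively, no cancellation can occur across paths, and one concludes $(M^n)_{i,j}=Q(n)\rho^*(i,j)^n(1+o(1))$ with $Q$ a polynomial whose leading coefficient is strictly positive whenever $(M^n)_{i,j}$ is not identically zero. Any characteristic root of this LRS is an eigenvalue of some block $B_{c_l}$ lying on some path $\pi$, and by primitivity the only eigenvalue of $B_{c_l}$ with modulus equal to $\rho(B_{c_l})$ is $\rho(B_{c_l})$ itself; consequently $\rho^*(i,j)$ is the unique characteristic root of maximum modulus, so the LRS is dominated. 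The main obstacle I anticipate is the cross-block asymptotic analysis together with the non-cancellation argument for $Q$; here non-negativity of all entries of $M$ and strict positivity of the Perron-Frobenius limits $P_k$ are essential, as they rule out destructive interference that could otherwise leave a sub-Perron eigenvalue dominating the sequence.
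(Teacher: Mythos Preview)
Your proposal is correct and follows essentially the same route as the paper: pass to the Frobenius block-upper-triangular form with irreducible diagonal blocks, use non-degeneracy plus Theorem~\ref{pf}(1) to conclude each diagonal block is primitive, invoke Theorem~\ref{pf}(2) for the asymptotics of block powers, expand $(M^n)_{i,j}$ as a sum over block paths, and use entrywise non-negativity (together with strict positivity of the Perron limits) to rule out cancellation so that the maximal spectral radius along some path dominates. The only cosmetic difference is that you separate the same-block and cross-block cases, whereas the paper treats them uniformly via the path sum~\eqref{eq:long}.
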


\begin{proof}
	Since $M \geq 0$ there exists a permutation matrix $P$ such that
 $M$ can be written in the form
 $M = PUP^{-1}$, where $U \geq 0$ is block upper triangular.   
 It follows that there exist $i',j' \in \{1,\ldots,d\}$ such that 
 \[ (M^n)_{i,j} = (P U^n P^{-1})_{i,j} =
    (U^n)_{i',j'} \]
    for all $n\in \mathbb{N}$.  Now write
	$$U =
	\begin{pmatrix}
		B_{1,1} & B_{1,2} & \ldots & B_{1,e} \\
		0 & B_{2,2} & \ldots  & B_{2,e} \\
		0 & 0 & \ddots & \vdots  \\
		0 & 0 & 0 & {B_{e,e}}
	\end{pmatrix}\, , $$
where all the blocks in $U$ are non-negative and the diagonal blocks 
$B_{1,1},B_{2,2},\ldots,B_{e,e}$ are irreducible.  Then
 \begin{gather}
 (U^n)_{i',j'} = \sum_{\substack{l_1<l_2<\cdots <l_m\\n_1+n_2+\cdots n_m = n - (m-1)}} e_{i'}^\top \, B_{l_1,l_1}^{n_1}B_{l_1,l_2} B_{l_2,l_2}^{n_2} \cdots B_{l_{m-1},l_m} B_{l_m,l_m}^{n_m}  \, e_{j'} \, ,
 \label{eq:long}
 \end{gather} 
 where the sum runs over all positive integers $m$ and strictly increasing sequences of block indices $l_1 < \cdots < l_m$.
	
	Consider a single block $B_{l,l}$ along the diagonal. Since it is irreducible and non-negative, it has cyclic peripheral spectrum. By our assumption of non-degeneracy, $r_l := \rho(B_{l,}) \geq 0$ is the only eigenvalue on the spectral circle. Thus $B_{l,l}$ is primitive and by our second Perron-Frobenius result above, asymptotically $B_{l,l}^n/r_l^n \sim  C_l$ where $C_l$ is a positive matrix and the asymptotic equivalence relation $\sim$ applies entry-wise. Let $r_{max}$ be the maximum spectral radius of a block $B_{l,l}$ lying on a path from $i'$ to $j'$.

We now analyse the asymptotic behavior of the normalized recurrence $(U^n)_{i',j'}/r_{max}^n$.
 Consider a summand $S_n$ in $(U^n)_{i',j'}/r_{max}^n$. Replacing the diagonal blocks with their asymptotic limits, $$S_n \sim  \left(\frac{r_{l_1}}{r_{max}}\right)^{n_1}\left(\frac{r_{l_2}}{r_{max}}\right)^{n_2}\dots \left(\frac{r_{l_m}}{r_{max}}\right)^{n_m} e_{i'}^\top \,C_{l_1}B_{l_1,l_2} C_{l_2} \cdots B_{l_{m-1},l_m} C_{l_m}  \, e_{j'}$$
 Although the number of terms in the sum grows polynomially in $n$, we see that each term with some $r_{l_k} < r_{max}$ that does not have $n_k$ constant tends to zero exponentially quickly.
	The remaining summands in $(U^n)_{i',j'}/r_{max}^n$
 are thus those where $n_k$ is non-constant only for blocks with $\rho(B_{l_k,l_k}) = r_{max}$.
Let $K$ be the sum of the constant powers for non-maximal blocks in such a summand $Q_n$, and $P$ be the product of the powers of non-maximal $r_{l_k}$. Then $$Q_n \sim r_{max}^{n} \cdot (P/r_{max}^{K+m-1}) \cdot e_{i'}^\top \,C_{l_1}B_{l_1,l_2} C_{l_2} \cdots B_{l_{m-1},l_m} C_{l_m}  \, e_{j'}$$.
	Observe that the coefficient of $r_{max}^n$ is a product of spectral radii and non-negative matrices, and is thus non-negative. This implies different such terms containing $r_{max}^n$ cannot cancel out. So $e_{i'}^\top U^n e_{j'} \sim A(n) \cdot r_{max}^n$ for some polynomial $A$
 with positive coefficients depending on the non-diagonal blocks and the spectral radii of the diagonal blocks. Thus $(M^n)_{i,j}$ is either dominated by $r_{max}$ or ultimately zero (the latter in case $r_{max} = 0$ or the sum in~\eqref{eq:long} is empty).
\end{proof}
	
\end{document}